\newcommand{\subparagraph}{}
\pgfplotsset{compat=newest}
\pgfplotsset{every axis/.append style={
		label style={font=\Large},
		tick label style={font=\large}  
}}
\tikzstyle{int}=[draw, fill=black!10, minimum size=5em,thick]
\tikzstyle{init} = [pin edge={to-,thick,black}]
\gdef\Shortstack{\@ifnextchar[\@Shortstack{\@Shortstack[c]}}
\gdef\@Shortstack[#1]#2{%
	\leavevmode
	\vbox\bgroup
	\baselineskip-\p@\lineskip 3\p@
	\let\mb@l\hss\let\mb@r\hss
	\expandafter\let\csname mb@#1\endcsname\relax
	\let\\\@stackcr\setlength{\baselineskip}{#2}%
	\@ishortstack}
\let\NAT@parse\undefined
\newcommand\orcidicon[1]{\href{https://orcid.org/#1}{\includegraphics[scale=0.05]{orcid}}}
\newcommand{\isExtended}[2]{#2}
\newcommand{\myParagraph}[1]{{\bf #1.}}
\newcommand{\Real}[1]{ { {\mathbb R}^{#1} } }
\newcommand{\Realp}[1]{ { {\mathbb R}^{#1}_+ } }
\DeclareMathOperator*{\argmin}{arg\,min}
\DeclarePairedDelimiter\floor{\lfloor}{\rfloor}
\newcommand{\gauss}{\mathcal{N}}
\newcommand{\x}[2]{x_{#1}(#2)}
\newcommand{\xdot}[2]{\dot{x}_{#1}(#2)}
\newcommand{\xdiff}[2]{d{x}_{#1}(#2)}
\newcommand{\xinf}{x_{\infty}}
\newcommand{\xtilde}[2]{\tilde{x}_{#1}(#2)}
\newcommand{\xdifftilde}[2]{d{\tilde{x}}_{#1}(#2)}
\newcommand{\xinftilde}{\tilde{x}_{\infty}}
\newcommand{\xbar}[2]{\bar{x}_{#1}(#2)}
\newcommand{\xbardiff}[2]{d\bar{x}_{#1}(#2)}
\renewcommand{\u}[2]{u_{#1}(#2)}
\newcommand{\ubar}[2]{\bar{u}_{#1}(#2)}
\newcommand{\noise}[2]{w_{#1}(#2)}
\newcommand{\dnoise}[2]{dw_{#1}(#2)}
\newcommand{\noisebar}[2]{\bar{w}_{#1}(#2)}
\newcommand{\noisetilde}[2]{\tilde{w}_{#1}(#2)}
\newcommand{\meas}[2]{y_{#1}(#2)}
\newcommand{\optvar}{\sigma^{2,*}_{\textit{ss}}}
\newcommand{\opteig}{\lambda^*}
\newcommand{\taun}{\tau_n}
\newcommand{\consMatrix}{\dfrac{\mathds{1}_N\mathds{1}_N^{\top}}{N}}
\newcommand{\optvarx}{\tilde{\sigma}^{2,*}_{x,\textit{ss}}}
\theoremstyle{plain}
\newtheorem{thm}{Theorem}
\newtheorem{prop}{Proposition}
\newtheorem{lemma}{Lemma}
\theoremstyle{definition}
\newtheorem{prob}{Problem}
\newtheorem{ass}{Assumption}
\theoremstyle{remark}
\newtheorem{rem}{Remark}
\renewcommand{\algorithmiccomment}[1]{\bgroup\hfill//~#1\egroup}
\newcommand{\linkToPdf}[1]{\href{#1}{\blue{(pdf)}}}
\newcommand{\linkToPpt}[1]{\href{#1}{\blue{(ppt)}}}
\newcommand{\linkToCode}[1]{\href{#1}{\blue{(code)}}}
\newcommand{\linkToWeb}[1]{\href{#1}{\blue{(web)}}}
\newcommand{\linkToVideo}[1]{\href{#1}{\blue{(video)}}}
\newcommand{\linkToMedia}[1]{\href{#1}{\blue{(media)}}}
\newcommand{\award}[1]{\xspace} 
\newcommand{\eg}{\emph{e.g.,}\xspace}
\newcommand{\ie}{\emph{i.e.,}\xspace}
\title{\LARGE \bf\titlecap{optimal network topology of multi-agent systems subject to computation and communication latency}\isExtended{}{ (with proofs)}*}
\author{Luca~Ballotta$ ^1 $, Mihailo R.\ Jovanovi\'c$ ^2 $ and Luca~Schenato$ ^1 $%
	\thanks{*This work has been partially funded by the CARIPARO Foundation Visiting Programme “HiPeR” and by the Italian Ministry of Education, University and Research (MIUR) through the PRIN project no. 2017NS9FEY entitled ``Realtime Control of 5G Wireless Networks: Taming the Complexity of Future Transmission and Computation Challenges'' and through the initiative "Departments of Excellence" (Law 232/2016). The views and opinions expressed in this work are those of the authors and do not necessarily reflect those of the funding institutions.}%
	\thanks{$ ^1 $L. Ballotta and L. Schenato are with the Department of Information Engineering, University of Padova, 35131 Padova, Italy
		{\tt\small \{ballotta, schenato\}@dei.unipd.it}}%
	\thanks{$ ^2 $M. R.\ Jovanovi\'c is with the Ming Hsieh Department of Electrical and Computer Engineering,
		University of Southern California, Los Angeles, CA 90089 USA {\ttfamily\small mihailo@usc.edu}}
}
\begin{document}
	
	\newgeometry{bottom=19.1mm,left=19.1mm,right=13.1mm,top=43mm}
	\maketitle
	\isExtended{}{%
		\begin{textblock}{10}(2,-4.5)
			\footnotesize
			\centering
			\setstretch{1}
			\textcopyright 2021 IEEE.  
			Personal use of this material is permitted.  
			Permission from IEEE must be obtained for all other uses, in any current or future media, including reprinting/republishing this material for advertising or promotional purposes, creating new collective works, for resale or redistribution to servers or lists, or reuse of any copyrighted component of this work in other works.
		\end{textblock}
		\begin{textblock}{10}(2,-3.5)
			\footnotesize
			\centering
			\setstretch{1}
			This paper has been accepted for the 29th Mediterranean Conference on Control and Automation.\\
			Please cite the paper as: L. Ballotta, M. R. Jovanovi\'c, and L. Schenato,\\
			“Optimal Network Topology of Multi-Agent Systems subject to Computation and Communication Latency”,\\
			29th Mediterranean Conference on Control and Automation (MED), 2021.
		\end{textblock}
	}
	\thispagestyle{empty}
	\pagestyle{empty}

\begin{abstract}
	We study minimum-variance feedback-control design for 
	a networked control system with retarded dynamics,
	where inter-agent information exchange is subject to latency.
	We prove that such a control design can be solved 
	efficiently for circular formations
	and compute near-optimal control gains in closed form.
	Also, we show that the centralized control
	is in general a poor design choice when
	adding communication links to the network
	increases the latency,
	and propose a control-driven optimization of the network topology.
	
	\begin{keywords}
		Communication latency,
		Decentralized control,
		Feedback latency,
		Minimum-variance control,
		Multi-agent systems,
		Networked control systems.
	\end{keywords}
	
	\isExtended{
	}{\vspace{-.5cm}}
	
\end{abstract}

\section{Introduction}

Large-scale networked control systems
have been scaling up both in terms of number of agents and spatial distribution
over the last years,
taking advantage of new communication protocols for massive systems,
\eg 5G~\cite{li20185g,biral2015challenges},
and of advances in embedded electronics~\cite{suleiman2018navion}
enhancing the computing performance of the network agents~\cite{yi2015survey,shi2016edge}.
Indeed, distributed sensing and computation represent
the true asset of such systems,
whose tasks would otherwise be infeasible.
Since the concept of multi-agent systems was introduced to the literature,
decentralized control techniques have been developed
to adapt classical approaches to network applications~\cite{BAKULE200887}.
A few examples can be found in control of vehicular formations~\cite{chehardoli2018adaptive,delimpaltadakis2018decentralized}
and of traffic congestion in vehicular networks~\cite{7835931},
ground and aerial robot swarms~\cite{yuan2017outdoor,schiano2016rigidity,li2017decentralized}
and UAV-UGV cooperative teams~\cite{arbanas2018decentralized}.

One of the main issues in multi-agent systems
is latency, due to communication constraints,
limited computational power, or delayed sensing and actuation.
In particular, limited bandwidth and finite channel capacity
pose a practical limitation to the scalability of large-scale systems
and when the information exchange among agents involves heavy data,
as with multimedia or in federated learning~\cite{shi2020joint,2020arXiv200601816O}.

Several works deal with decentralized control in the presence of latency.
~\cite{8844785,doi:10.1177/1077546318791025} design control laws with convergence conditions
for platoons under various network topologies.
\cite{7994706,ZONG2019412} study consensus of multi-agent systems with time-delays,
showing its dependence on parameters such as dynamics and communication graph and
giving conditions for mean-square stability.
\cite{ren2017finite} is concerned with finite-time stability of discrete-time neutral-delay systems.

Drawing inspiration from this approach,
we design minimum-variance control of networked systems
where the information exchange among agents is subject to latency,
and we assume that the delays
increase with the number of links.
Such a situation occurs, \eg
when the total bandwidth available to the network is fixed \textit{a priori}~\cite{garcia2016periodic},
so that additional links reduce the bandwidth for all communications,
or when multi-hop transmissions yield cumulative delays~\cite{gupta2009delay}.
Also, we demonstrate that
adding links beyond a certain threshold
worsens the performance,
because the latency increase overtakes the benefit accrued by a larger amount of feedback information.
In particular, centralized control (the complete graph)
is in general a poor choice.
To the best of our knowledge,
this analysis is new in the existing literature.

To approach such control design in the presence of latency,
we consider circular formations.
Such systems are deployed for, \eg
source seeking by smart mobile sensors~\cite{moore2010source,brinon2015distributed}
or target tracking by robots~\cite{arranz2009translation,ma2013cooperative}.
This allows to get analytical results
which give insights on the design,
while extension to more general topologies is straightforward.

\subsection{contribution and paper outline}
The paper is organized as follows.
\autoref{sec:setup} introduces the system model,
detailing results on circular formations (\autoref{sec:eigs-subsystems})
and delay systems (\autoref{sec:stochastic-retarded-diff-eq}).
\autoref{sec:optimization} formalizes the control problem:
namely, we optimize the feedback gains in order to minimize the steady-state scalar variance
of the system.
We first solve the problem with a single parameter (equal gains) in~\autoref{sec:single-param}
and then let multiple parameters in~\autoref{sec:multiple-params}.
In the latter, we prove that an efficient solution
coincides with a particular single-parameter configuration.
In~\autoref{sec:optimal-n}, we discuss the control performance under varying network topologies,
proposing an optimization that takes into account how the delays increase with the number of links.
Specifically, we show that using all communication links (centralized control) yields in general poor performance.
Finally, concluding remarks are drawn in~\autoref{sec:conclusions}, together with proposals for further study.
	\restoregeometry

\section{setup}\label{sec:setup}


\subsection{system model}\label{sec:single-integrator-model}
We consider a circular formation composed of $ N $ agents,
where each agent $ i\in[N] $ is modeled as a single integrator:
\begin{equation}\label{eq:systemDynamicsOrig}
	\begin{aligned}
		\xbardiff{i}{t} = \ubar{i}{t} dt + d\noisebar{i}{t}
	\end{aligned}
\end{equation}
where $ \xbar{i}{t}, \ubar{i}{t}\in\Real{} $ are
the state and the control input of agent $ i $ at time $ t $, respectively,
and $ \noisebar{i}{t} $ is a standard Brownian motion.
Each  agent computes its control input $ \ubar{i}{t} $ 
exploiting the mismatches between its own state (measured) 
and other agents' states (received via wireless). 
\begin{ass}\label{ass:hypothesis}
	Agent $ i $ receives state measurements from the $ n < \nicefrac{N}{2} $ agent pairs 
	located $ \ell $ positions ahead and behind,
	$ \ell = 1,...,n $.
	Measurements are received after the delay $ \taun = f(n)\tau_{\textit{min}} $ from the corresponding state sampling,
	where $ f(n) $ is a non-decreasing sequence
	and $ \tau_{\textit{min}} \in \Realp{} $ is a constant.
\end{ass}
\begin{rem}
	The time $ \taun $ embeds both the communication delay,
	due to channel constraints,
	and the computation delay,
	arising if the agents preprocess the acquired measurements.
	In practice, $ f(n) $ is to be estimated or learned from data.
\end{rem}
The control input at time $ t $ has the following structure:
\begin{subequations}\label{eq:controlInput}
	\begin{equation}\label{eq:controlInputFeedforward}
		\ubar{i}{t} = \u{d}{t} + \u{i}{t}
	\end{equation}
	\begin{equation}\label{eq:controlInputFeedback}
		\u{i}{t} = - \sum_{\ell=1}^{n}k_\ell\left[\meas{i,i-\ell}{t-\taun} + \meas{i,i+\ell}{t-\taun}\right]
	\end{equation}
\end{subequations}
where $ \u{d}{t} $ is the desired reference (feedforward control input), 
$ \u{i}{t} $ is the feedback control input for agent $ i $,
$ k_\ell $ is $ \ell $-th the feedback gain, $ \ell = 1,...,n $, and
\begin{equation}\label{eq:meas}
	\meas{i,i\pm\ell}{t} = 
		\begin{cases}
			\xbar{i}{t} - \xbar{i\pm\ell}{t}, & 0<i\pm\ell\le N\\
			\xbar{i}{t} - \xbar{i\pm\ell\mp N}{t}, & \mbox{otherwise}
		\end{cases}
\end{equation}
are the state mismatches computed with the received measurements,
which in~\eqref{eq:controlInputFeedback} are delayed according to~\cref{ass:hypothesis}.\\
System~\eqref{eq:systemDynamicsOrig}--\eqref{eq:controlInput} can be modeled in vector-matrix form as
\begin{equation}\label{eq:stateSpaceSystem}
	d\xbar{}{t} = \mathds{1}_N\u{d}{t} dt - K\xbar{}{t-\taun}dt + d\noisebar{}{t}
\end{equation}
where $ \mathds{1}_N $ is the vector of all ones of dimension $ N $,
$ \xbar{}{t} $ contains the stacked states of all agents at time $ t $,
and $ d\noisebar{}{t} \sim \mathcal{N}(0,I_Ndt) $. 
Denoting the gain vector as $ k^\top = [k_1,\dots,k_n] $,
the circulant feedback matrix $ K \in\Real{N\times N} $ can be written as
\begin{subequations}\label{eq:feedbackMatrixMulti}
	\begin{equation}
		K = K_f(k) + K_f^\top(k)
	\end{equation}
	\begin{equation}\label{eq:feedbackMatrixGains}
		K_f(k) = 
		\mathrm{circ}\left(\sum_{\ell=1}^nk_\ell, -k_1, \dots, -k_n, 0,  \dots, 0\right)
	\end{equation}
\end{subequations}
where $ \mathrm{circ}(a_1,...,a_N) $ is the circulant matrix with the vector $ [a_1,...,a_N] $ as first row.
For example, if $ N = 6 $ and $ n = 2 $, 
\begin{equation}\label{eq:exampleKMultiParam}
	K = 
	\begin{bmatrix}
		\ast & \circ & \bullet & 0 & \bullet & \circ \\
		\circ & \ast & \circ & \bullet & 0 & \bullet \\
		\bullet & \circ & \ast& \circ & \bullet & 0 \\
		0 & \bullet & \circ & \ast & \circ & \bullet \\
		\bullet & 0 & \bullet & \circ & \ast & \circ \\
		\circ & \bullet & 0 & \bullet & \circ & \ast
	\end{bmatrix}
\end{equation}




\subsection{decoupling the error dynamics}\label{sec:eigs-subsystems}
Let us decompose the state as $ \xbar{}{t} = \xbar{m}{t} + \x{}{t} $,
where all components of $ \xbar{m}{t} $ equal the mean of $ \xbar{}{t} $
and $ \x{}{t} $ represents the current mismatches of the agents' states:
\begin{gather}\label{eq:stateDecouplingMeanErrors}
\xbar{m}{t} = P\xbar{}{t}, \qquad \x{}{t} = \Omega\xbar{}{t} \\
P \doteq \consMatrix, \qquad \Omega \doteq I-P
\end{gather}
The mean-vector dynamics equation reads (c.f.~\eqref{eq:stateSpaceSystem})
\begin{equation}\label{eq:stateMeanDynamics}
\xbardiff{m}{t} = P\xbardiff{}{t} = \mathds{1}_N\u{d}{t}dt + P d\noisebar{}{t}
\end{equation}
because the columns of $ K $ belong to the kernel of $ P $.
The error dynamics is
\begin{align}\label{eq:stateErrorDynamics}
\begin{split}
\xdiff{}{t} = \Omega\xbardiff{}{t}  &= -\Omega K\xbar{}{t-\taun}dt + \Omega d\noisebar{}{t} \\
&= -K\Omega\xbar{}{t-\taun}dt + d\noise{}{t} \\
&= -K\x{}{t-\taun}dt + d\noise{}{t}
\end{split}
\end{align}
where $ \noise{}{t} \!=\! \Omega\noisebar{}{t} $. 
From now on, we focus on system~\eqref{eq:stateErrorDynamics}. 

We exploit symmetry to diagonalize $ K = T\Lambda T^\top $,
with $ T $ orthogonal and $ i $-th eigenvalue $ \lambda_i = \Lambda_{ii} $ equal to (c.f.~\cite{circulant})
\begin{equation}\label{eq:eigenvaluesCirculant}
	\lambda_i = 2\sum_{\ell=1}^nk_\ell\left(1-\cos\left(\dfrac{2\pi (i-1) \ell}{N}\right)\right), \ i = 1,...,N
\end{equation}
The dynamics of the new state $ \xtilde{}{t} = T^{\top}\x{}{t} $ reads
\begin{equation}\label{eq:scalar-subs-eigs}
	\xdifftilde{i}{t} = -\lambda_i\xtilde{i}{t-\taun}dt + d\noisetilde{i}{t}
\end{equation}
where $ d\noisetilde{}{t} \sim \gauss(0,Qdt) $ with covariance matrix
\begin{equation}\label{eq:noiseCovariance}
	Q = T^{\top}\Omega T = (\Omega T)^{\top} (\Omega T) = \left[
	\begin{array}{c|c}
		0 & 0 \\
		\hline
		0 & I_{N-1}
	\end{array}
	\right]
\end{equation}
because the leftmost eigenvector in $ T $, associated with $ \lambda_1 = 0 $, 
belongs to the kernel of $\Omega$,
and all the other eigenvectors in $ T $ are also in $ \Omega $.
In particular, the coordinate $ \xtilde{1}{t} $ (\ie the mean of $ x(t) $),
has trivial dynamics
and does not affect system~\eqref{eq:stateErrorDynamics}, 
whose state coordinates converge to random variables with expectation equal to $ \xtilde{1}{0} $.
For the sake of simplicity,
in the following we assume that $ x(0) $ is zero mean.

\subsection{minimum-variance control of scalar delay systems}\label{sec:stochastic-retarded-diff-eq}
Let us consider the stochastic retarded differential equation
\begin{equation}\label{eq:retarded-diff-eq}
	\begin{aligned}
		\xdiff{}{t} &= -a\x{}{t-\tau}dt + \dnoise{}{t},  \quad t \ge 0\\
		\x{}{t} &= \x{0}{t}, \quad t\in [-\tau,0]
	\end{aligned}
\end{equation}
where $ \x{}{t},a \in\Real{} $,
$ \tau\in\Realp{} $ is the delay,
$ \noise{}{t} $ is standard Brownian noise,
and $ \x{0}{t} $, $ t\in[-\tau,0] $, is the initial condition.
\begin{thm}[\citen{KuchlerLangevinEqs}]\label{thm:retarded-eq-steady-state}
	Eq.~\eqref{eq:retarded-diff-eq} admits a steady-state solution $ \x{\textit{ss}}{t} $ if and only if 
	\begin{equation}\label{eq:retarded-eq-steady-state-condition}
		a \in \left(0,\dfrac{\pi}{2\tau}\right)
	\end{equation}
	If it exists, $ \x{\textit{ss}}{t} $ is unique and it is a zero-mean Gaussian process with variance
	\begin{equation}\label{eq:steady-state-variance}
		\sigma^2_{\textit{ss}} = \int_{0}^{\infty} x_d^2(s)ds = \dfrac{1+\sin(a\tau)}{2a\cos(a\tau)}
	\end{equation}
	where $ \x{d}{t} $ is the so-called \emph{fundamental solution} of the deterministic equation corresponding to~\eqref{eq:retarded-diff-eq}:
	\begin{equation}\label{eq:retarded-deterministic-diff-eq}
		\begin{aligned}
			\xdot{}{t} &= -a\x{}{t-\tau}, \quad t \ge 0\\
			\x{}{t} &= \delta(t), \quad t\in [-\tau,0]
		\end{aligned}
	\end{equation}
	If~\eqref{eq:retarded-eq-steady-state-condition} holds, $ \x{d}{t} $ and $ \dot{x}_d(t) $ are exponentially stable.
\end{thm}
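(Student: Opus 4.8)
The plan is to decompose the statement into three parts — stability of the deterministic equation, existence/uniqueness/Gaussianity of the stationary process, and the closed-form variance — and to build everything on the fundamental solution $x_d$ together with the stationary autocorrelation.

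First I would analyze the deterministic equation~\eqref{eq:retarded-deterministic-diff-eq}. Substituting $x(t)=e^{\lambda t}$ yields the characteristic quasi-polynomial $\lambda + a\,e^{-\lambda\tau}=0$, and I would show that for $a>0$ every root lies in the open left half-plane if and only if $a\tau<\pi/2$. The key observation is that a root can reach the imaginary axis only at $\lambda=i\omega$, which forces $\cos(\omega\tau)=0$ and $\omega=a\sin(\omega\tau)$ simultaneously, so the first crossing occurs exactly at $a\tau=\pi/2$; a continuity/root-counting argument then shows all roots stay strictly to the left for $a\tau<\pi/2$. Since the spectral abscissa is negative in that regime, $x_d$ and $\dot{x}_d$ decay exponentially, which establishes both condition~\eqref{eq:retarded-eq-steady-state-condition} and the final exponential-stability clause.

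Next I would write the solution of~\eqref{eq:retarded-diff-eq} by variation of constants as a deterministic transient driven by the initial data plus the It\^o integral $\int_0^t x_d(t-s)\,dw(s)$. Exponential stability gives $x_d\in L^2([0,\infty))$, so the transient vanishes and, as $t\to\infty$, the stochastic integral converges to a stationary process $x_{\textit{ss}}$; being a linear functional of Brownian motion it is zero-mean Gaussian, and uniqueness follows because any two solutions merge as their transients decay. The It\^o isometry then immediately yields the integral representation $\sigma^2_{\textit{ss}}=\int_0^\infty x_d^2(s)\,ds$.

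For the quantitative closed form I would use the stationary autocorrelation $R(s)=\mathbb{E}[x_{\textit{ss}}(t+s)x_{\textit{ss}}(t)]$ rather than evaluating the $L^2$ integral directly. Applying It\^o's formula to $x^2$ and imposing stationarity (so $\mathbb{E}[d(x^2)]=0$) gives the identity $R(\tau)=\mathbb{E}[x(t)x(t-\tau)]=1/(2a)$; differentiating $R$ and inserting the dynamics gives $R'(s)=-a\,R(s-\tau)$ for $s>0$, which on $(0,\tau)$ becomes the harmonic equation $R''(s)=-a^2R(s)$ after using evenness $R(s-\tau)=R(\tau-s)$. Writing $R(s)=A\cos(as)+B\sin(as)$, the conditions $R'(0)=-a\,R(\tau)=-\tfrac12$ and $R'(\tau)=-a\,R(0)$ fix $B=-1/(2a)$ and $A(1-\sin(a\tau))=\cos(a\tau)/(2a)$, so $\sigma^2_{\textit{ss}}=R(0)=A$ simplifies through $\cos^2=1-\sin^2$ to $\tfrac{1+\sin(a\tau)}{2a\cos(a\tau)}$. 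I expect the main obstacle to be the global root-location argument of Step~1, since the quasi-polynomial has infinitely many roots and one must rule out any in the right half-plane (leaning on a Pontryagin/Hayes-type criterion); a secondary technical point is justifying the regularity of $R$ and the even-extension of the delay ODE across $s=\tau$ needed to solve the second-order equation on $(0,\tau)$.
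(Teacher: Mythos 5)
The paper offers no proof of this theorem: it is imported wholesale from the cited reference \cite{KuchlerLangevinEqs}, so there is no in-paper argument to compare yours against. Judged on its own merits, your reconstruction is correct and follows the same route as the cited source (K\"uchler--Mensch): Hayes-type root analysis of the quasi-polynomial $\lambda + a e^{-\lambda\tau} = 0$, where an imaginary-axis root $\lambda = i\omega$ forces $\cos(\omega\tau)=0$ and $\omega = a\sin(\omega\tau)$, hence first crossing at $a\tau = \pi/2$; variation of constants plus the It\^o isometry for existence, uniqueness, Gaussianity and $\sigma^2_{\textit{ss}} = \int_0^\infty x_d^2(s)\,ds$; and the covariance-function ODE for the closed form. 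Your trigonometric bookkeeping checks out: stationarity of $\mathbb{E}[x^2]$ gives $R(\tau) = 1/(2a)$, the one-sided condition $R'(0^+) = -aR(\tau)$ gives $B = -1/(2a)$, evenness turns $R'(s) = -aR(s-\tau)$ into $R'' = -a^2 R$ on $(0,\tau)$, and $aA\left(1-\sin(a\tau)\right) = \cos(a\tau)/2$ yields $R(0) = \left(1+\sin(a\tau)\right)/\left(2a\cos(a\tau)\right)$, exactly \eqref{eq:steady-state-variance}; one can cross-check that this $A,B$ pair reproduces $R(\tau)=1/(2a)$.

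One direction is genuinely under-argued, though: the theorem is an \emph{if and only if}, and your sketch only delivers the ``if.'' Showing that no stationary solution exists when $a \notin \left(0, \pi/(2\tau)\right)$ — including $a \le 0$ (for $a=0$ the solution is Brownian motion; for $a<0$ there is a positive real root) and the marginal case $a\tau = \pi/2$ — does not follow from instability of $x_d$ alone; one must argue, e.g., that the variance of the stochastic convolution $\int_0^t x_d(t-s)\,dw(s)$ diverges or that any putative stationary law is contradicted by the spectral decomposition, which is where the cited reference does real work. The secondary technical points you flagged yourself (root-counting for the infinitely many characteristic roots, reading $R'(0)$ as a right derivative since $R$ has a corner at the origin, and regularity of $R$ on $(0,\tau)$) are handled correctly in spirit and are standard to complete.
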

\begin{lemma}\label{lemma:convexVariance}
	The variance $ \sigma^2_{\textit{ss}} $
	in~\eqref{eq:steady-state-variance}
	is strictly convex in $ a $.
\end{lemma}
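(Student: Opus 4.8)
**

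The goal is to show that the function
\[
\sigma^2_{\textit{ss}}(a) = \frac{1+\sin(a\tau)}{2a\cos(a\tau)}
\]
is strictly convex on the interval $a\in\left(0,\nicefrac{\pi}{2\tau}\right)$, with $\tau>0$ fixed. The plan is to compute the second derivative with respect to $a$ directly and argue that it is strictly positive on the entire admissible interval.

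First I would simplify the algebra by a change of variable: set $\theta = a\tau$, so that $a = \nicefrac{\theta}{\tau}$ and $\theta$ ranges over $\left(0,\nicefrac{\pi}{2}\right)$. Since $\tau>0$ is a fixed positive constant and scaling the argument by a positive constant preserves convexity, it suffices to show strict convexity of
\[
g(\theta) \doteq \frac{1+\sin\theta}{2\theta\cos\theta}
\]
on $\left(0,\nicefrac{\pi}{2}\right)$; the factor $\tau$ only rescales and multiplies derivatives by positive powers of $\tau$. This reduces the problem to a single-variable trigonometric inequality free of the parameter $\tau$.

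Next I would compute $g'(\theta)$ and $g''(\theta)$. The cleanest route is to use the half-angle identity $\frac{1+\sin\theta}{\cos\theta} = \tan\left(\frac{\theta}{2} + \frac{\pi}{4}\right)$, which rewrites $g(\theta) = \frac{1}{2\theta}\tan\left(\frac{\theta}{2}+\frac{\pi}{4}\right)$. Differentiating a product of the simple factor $\nicefrac{1}{\theta}$ with a tangent is considerably more tractable than differentiating the original quotient twice. After collecting terms, $g''(\theta)$ will have a denominator that is manifestly positive on $\left(0,\nicefrac{\pi}{2}\right)$ (a positive power of $\theta$ times a positive power of $\cos\theta$), so the whole question collapses to showing that the numerator, some trigonometric polynomial in $\sin\theta$ and $\cos\theta$ together with terms linear and quadratic in $\theta$, stays strictly positive on the open interval.

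The main obstacle will be establishing positivity of that numerator, since it will not be a single-signed monomial but a mixture of terms that individually change sign. My plan there is to expand each trigonometric factor in its Taylor series about $\theta = 0$ and verify term-by-term that the resulting power series has nonnegative (and not all zero) coefficients, which yields strict positivity for $\theta>0$; alternatively, I would check the boundary behaviour ($g''\to$ a positive constant as $\theta\to 0^+$, and $g''\to+\infty$ as $\theta\to\nicefrac{\pi}{2}^-$ because of the $\cos\theta$ in the denominator) and show the numerator has no interior zero by a monotonicity argument on its own derivative. Either way, the final step is an elementary but slightly delicate bound, and I expect that to be where the real work lies; everything upstream is routine differentiation made lighter by the half-angle substitution.
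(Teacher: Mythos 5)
Your proposal is correct in its main line and takes a genuinely different route from the paper. The paper also normalizes $\tau=1$ (noting $\tau$ does not affect convexity) and differentiates the original quotient twice, but then it reduces positivity of the second derivative to a bare trigonometric-polynomial inequality and settles it by an ad hoc three-case analysis ($a>1$, $\cos a<a\le 1$, $a\le\cos a$), pairing each negative monomial with a dominating positive one. Your half-angle substitution buys more than you seem to realize: writing $h(\theta)=\tan\left(\frac{\theta}{2}+\frac{\pi}{4}\right)=\sum_{k\ge 0}c_k\theta^k$, every derivative of $\tan u$ is a polynomial in $\tan u$ with nonnegative coefficients (induction via $\tan'=1+\tan^2$), so evaluating at $u=\pi/4$ gives $c_k>0$ for all $k$, with radius of convergence $\pi/2$. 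Then
\begin{equation*}
	g(\theta)=\frac{h(\theta)}{2\theta}=\frac{c_0}{2\theta}+\frac{c_1}{2}+\sum_{k\ge 2}\frac{c_k}{2}\,\theta^{k-1},
	\qquad
	g''(\theta)=\frac{\theta^2h''-2\theta h'+2h}{2\theta^3}=\frac{1}{2\theta^3}\sum_{k\ge 0}c_k(k-1)(k-2)\theta^k,
\end{equation*}
and since $c_k(k-1)(k-2)\ge 0$ with the $k=0$ term equal to $2c_0>0$, strict convexity follows immediately --- the ``slightly delicate bound'' you deferred is in fact entirely mechanical, and arguably cleaner and more systematic than the paper's case-checking, at the modest cost of invoking power-series machinery. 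Two caveats: first, you left the crux as a plan rather than executing it, and your proposed expansion of the messy numerator of the original quotient's $g''$ would be far more painful than expanding $h$ itself as above; second, your fallback boundary claim is wrong --- since $g(\theta)\sim\nicefrac{1}{(2\theta)}$ as $\theta\to 0^+$, one has $g''\to+\infty$ like $\theta^{-3}$, not a positive constant, so the boundary-plus-monotonicity alternative as stated would need repair. Stick to the series argument.
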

\begin{proof}
	Tedious but standard computations yield $ (\sigma^2_{\textit{ss}})'' > 0 $.
	See 
	\isExtended{%
		Appendix A in the preprint~\cite{2021arXiv210110394B}
		}{%
		\cref{app:convexVariance}
	}%
	for details.
\end{proof}
Let us consider the scalar system with delayed dynamics
\begin{equation}\label{eq:systemRetardedEquation}
	d\x{}{t} = \u{}{t}dt + d\noise{}{t}
\end{equation}
and control input $ \u{}{t} = -\lambda\x{}{t-\tau} $.
We wish to design the gain $ \lambda $ in order to minimize the steady-state variance of $ \x{}{t} $.
If $ \tau = 0 $,
the (infeasible) optimum is $ +\infty $ and $ \lambda $ can be increased arbitrarily.
Instead, if $ \tau>0 $, the following holds.
\begin{figure}
	\centering
	\begin{minipage}[l]{.48\linewidth}
		\centering
		\includegraphics[height=.83\linewidth]{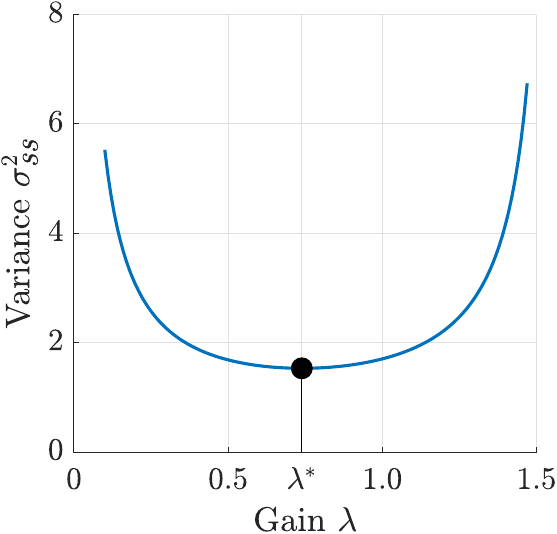}
		\caption{Variance $ \sigma^2_{\textit{ss}} $ as a function of the gain $\lambda$ ($ \tau = 1 $).}
		\label{fig:sigma_of_lambda}
	\end{minipage}%
	\hfill
	\begin{minipage}[r]{.48\linewidth}
		\centering
		\includegraphics[height=.83\linewidth]{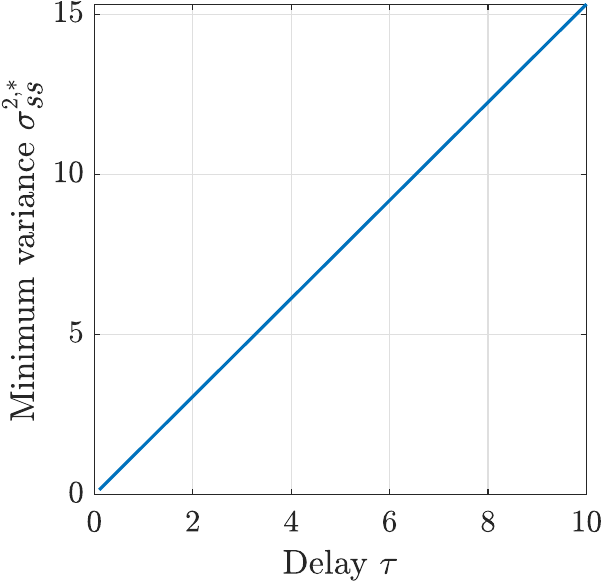}
		\caption{Minimum variance $ \sigma^{2,*}_{\textit{ss}} $ as a function of $\tau$.}
		\label{fig:sigma_star_of_tau}
	\end{minipage}\\
	\vspace{2mm}
	\begin{minipage}{\linewidth}
		\centering
		\includegraphics[height=.41\linewidth]{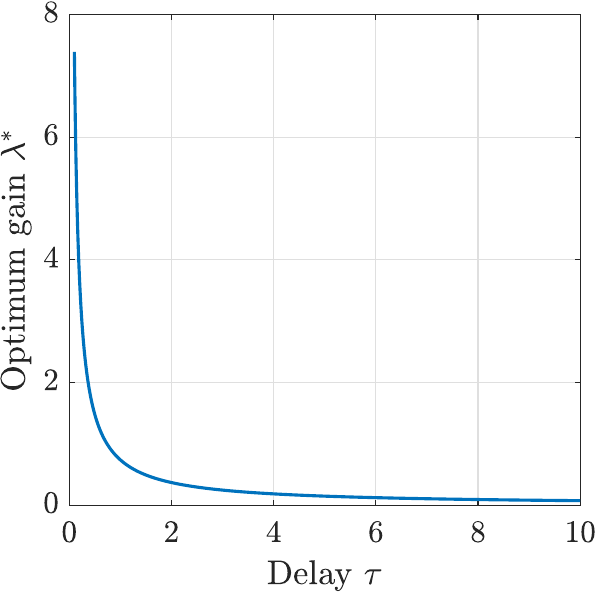}
		\caption{Optimal gain $ \lambda^{*} $ as a function of the delay $\tau$.}
		\label{fig:lambda_star_of_tau}
	\end{minipage}
\end{figure}
\begin{prop}\label{prop:optimumGain}
	Let system~\eqref{eq:systemRetardedEquation} with input $ \u{}{t} = -\lambda\x{}{t-\tau} $, $ \tau > 0 $,
	and steady-state variance of $ x(t) $ as per~\eqref{eq:steady-state-variance}:
	\begin{equation}\label{eq:variance}
		\sigma^2_{\textit{ss}}(\lambda) = \dfrac{1+\sin(\lambda \tau)}{2\lambda\cos(\lambda \tau)}
	\end{equation}
	Then, the minimum-variance control problem
	\begin{equation}\label{eq:opt-var-eig}
		\argmin_{\lambda\in\left(0,\frac{\pi}{2\tau}\right)}\sigma^2_{\textit{ss}}(\lambda)
	\end{equation}
	has the following unique solution:
	\begin{equation}\label{eq:optGainExplicit}
		\opteig = \dfrac{\beta^*}{\tau}, \qquad \optvar \doteq \sigma^2_{\textit{ss}}(\lambda^*) = \dfrac{1 + \sin\beta^*}{2\cos^2\beta^*}\tau
	\end{equation}
	where $ \beta^* \in\left(0,\nicefrac{\pi}{2}\right)$ is the unique solution of $ \beta = \cos\beta $.
\end{prop}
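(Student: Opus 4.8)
The plan is to first eliminate the delay from the problem by the substitution $\beta \doteq \lambda\tau$, which maps the feasible interval $\lambda\in(0,\pi/(2\tau))$ bijectively onto $\beta\in(0,\pi/2)$ and rewrites the cost as $\sigma^2_{\textit{ss}}(\lambda) = \tau\, g(\beta)$ with
\begin{equation*}
	g(\beta) = \frac{1+\sin\beta}{2\beta\cos\beta}.
\end{equation*}
Since $\lambda\mapsto\beta$ is affine and $\tau>0$ is a positive constant, Lemma~\ref{lemma:convexVariance} (with $a$ playing the role of $\lambda$) carries over to say that $g$ is strictly convex on $(0,\pi/2)$. Consequently problem~\eqref{eq:opt-var-eig} admits at most one minimizer, and it remains only to locate it and to confirm it is interior.

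Next I would rule out the boundary. As $\beta\to0^+$ the denominator $2\beta\cos\beta\to0^+$ while the numerator tends to $1$, and as $\beta\to(\pi/2)^-$ the factor $\cos\beta\to0^+$ while the numerator tends to $2$; in both cases $g(\beta)\to+\infty$. A continuous, strictly convex function that diverges at both endpoints of an open interval attains its infimum at a single interior point, so the optimizer exists and is pinned down by the first-order condition $g'(\beta)=0$.

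The computational heart of the argument, and the step I expect to demand the most care, is showing that this stationarity condition collapses to the stated fixed-point equation. Differentiating the quotient gives $g'(\beta)$ a strictly positive prefactor $(2\beta^2\cos^2\beta)^{-1}$ times a trigonometric numerator; using $\sin^2\beta+\cos^2\beta=1$ the numerator simplifies exactly to
\begin{equation*}
	(1+\sin\beta)\,(\beta-\cos\beta).
\end{equation*}
Because $1+\sin\beta>0$ on $(0,\pi/2)$, the condition $g'(\beta)=0$ is equivalent to $\beta=\cos\beta$. The map $\beta\mapsto\beta-\cos\beta$ has derivative $1+\sin\beta>0$ and is negative at $0$ and positive at $\pi/2$, so it possesses a unique root $\beta^*$ there; by strict convexity this root is the unique global minimizer, yielding $\opteig=\beta^*/\tau$.

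Finally I would substitute back to obtain the optimal cost. Evaluating $\sigma^2_{\textit{ss}}=\tau\, g(\beta^*)$ and replacing one factor $\beta^*$ in the denominator through the identity $\beta^*=\cos\beta^*$, so that $\beta^*\cos\beta^*=\cos^2\beta^*$, gives
\begin{equation*}
	\optvar = \tau\,\frac{1+\sin\beta^*}{2\cos^2\beta^*},
\end{equation*}
which matches~\eqref{eq:optGainExplicit} and completes the proof.
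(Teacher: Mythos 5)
Your proof is correct and follows essentially the same route as the paper's: strict convexity from Lemma~\ref{lemma:convexVariance}, the first-order condition whose numerator factors as $(1+\sin\beta)(\beta-\cos\beta)$, and the change of variable $\beta=\tau\lambda$ yielding the fixed point $\beta^*=\cos\beta^*$. Your only additions---ruling out the boundary via the divergence of the cost at both endpoints and explicitly substituting back to verify the closed-form optimal variance---are small completeness refinements the paper leaves implicit.
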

\begin{proof}
	Setting the derivative of~\eqref{eq:variance} equal to zero leads straightly to~\eqref{eq:optGainExplicit}.
	See 
	\isExtended{%
		Appendix B in~\cite{2021arXiv210110394B}
	}{%
	\cref{app:optGainExplicit}
	}%
	for details.
\end{proof}
\autoref{fig:sigma_of_lambda}--\ref{fig:lambda_star_of_tau} show the variance $ \sigma^2_{\textit{ss}}(\lambda) $
and the optimal variance $ \optvar $ and gain $ \opteig $ as functions of the delay, respectively.

\section{Optimization of feedback gains}\label{sec:optimization}

\begin{figure}
	\centering
	\includegraphics[width=0.7\linewidth]{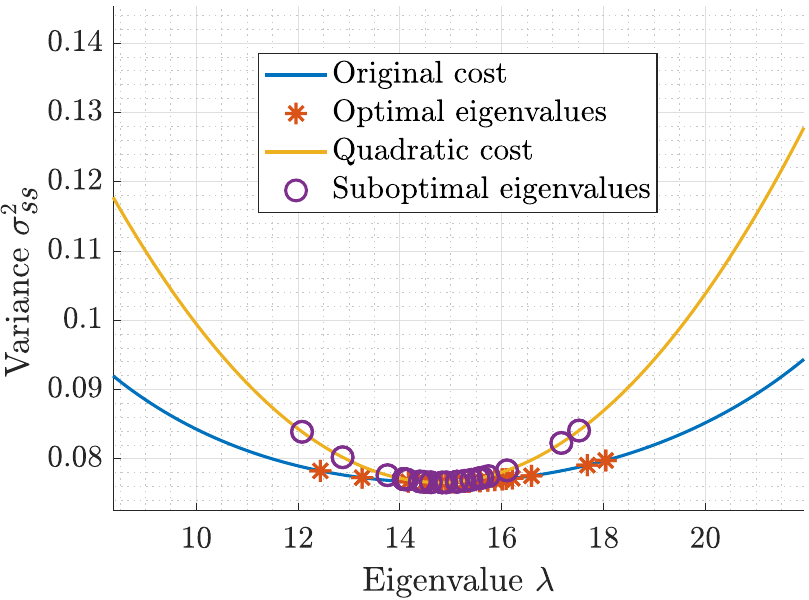}
	\caption{Cost functions in~\eqref{eq:problem-recast} and in~\eqref{eq:problem-recast-eig-squares} 
		about the minimum.}
	\label{fig:comparison_costs}
\end{figure}

In this section, we aim to optimize the feedback gains in~\eqref{eq:feedbackMatrixGains} to minimize the steady-state scalar variance of $ \x{}{t} $. 
\begin{prob}\label{prob}
	Given system~\eqref{eq:stateErrorDynamics} with 
	communication as per~\cref{ass:hypothesis},
	find the feedback gains $ k^\top \doteq [k_1,\dots,k_n] $ 
	minimizing the steady-state scalar variance of $ \x{}{t} $:
	\begin{equation}\label{eq:problem}
		\argmin_{k\in\Real{n}}\; \mathbb{E}\left[\|\xinf\|^2\right]
	\end{equation}
	where $ \xinf \doteq \lim_{t\rightarrow\infty}\x{}{t} $.
\end{prob}

\myParagraph{Problem reformulation}
In virtue of the change of basis presented in~\autoref{sec:eigs-subsystems}, we can write
\begin{equation}\label{eq:varianceChangeOfBasis}
	\mathbb{E}\left[\|\xinf\|^2\right] = \mathbb{E}\left[\|\xinftilde\|^2\right] = \sum_{i=2}^{N} \sigma^2_{\textit{ss}}(\lambda_i)
\end{equation}
where the contribution of $ \lambda_1 = 0 $ is neglected in virtue of the trivial dynamics of $ \xtilde{1}{t} $.
\cref{eq:problem} can be rewritten as
\begin{equation}\label{eq:problem-recast}
	\begin{aligned}
		\argmin_{k\in\Real{n}} &\quad \sum_{i=2}^{N} \sigma^2_{\textit{ss}}(\lambda_i(k))\\
		\mathrm{s.t.} &\quad \lambda_M(k) < \frac{\pi}{2\taun}
	\end{aligned}
\end{equation}
where $ \lambda_M \doteq\max_{i=2}^N\lambda_i $ and the constraint ensures stability
(c.f.~\eqref{eq:retarded-eq-steady-state-condition}).
In virtue of~\eqref{eq:eigenvaluesCirculant} and~\cref{lemma:convexVariance},
problem~\eqref{eq:problem-recast} is convex
and the optimal gains can be found numerically.
To achieve analytical intuition,
we shift to the (sub-optimal) quadratic optimization of the variance arguments:
\begin{equation}\label{eq:problem-recast-eig-squares}
	\begin{aligned}
		\argmin_{k\in\Real{n}}&\quad \sum_{i=2}^N \left(\lambda_i(k)-\opteig\right)^2\\
		\mathrm{s.t.} &\quad \lambda_M(k) < \dfrac{\pi}{2\taun}
	\end{aligned}
\end{equation}
The intuition behind the above reformulation is that the quadratic cost function in~\eqref{eq:problem-recast-eig-squares}
approximates well the variance in~\eqref{eq:problem-recast} about the minimum
(up to scaling and translation).
In particular, the spectrum of $ K $ needs to be ``close" to the optimal gain $ \opteig $
because the variance grows quickly with the smallest and largest eigenvalues (c.f.~\autoref{fig:sigma_of_lambda}).
\autoref{fig:comparison_costs} compares the two cost functions in~\eqref{eq:problem-recast}--\eqref{eq:problem-recast-eig-squares}
for $ N = 50 $, $ n = 10 $ and $ \taun = 0.1 $,
together with their optimal eigenvalues in the single-parameter case (see~\autoref{sec:single-param}).
\begin{rem}[Control regularization]\label{rem:regularization}
	The cost in~\eqref{eq:problem} may be modified to penalize excessive control efforts.
	For example, a weighted norm of $ k $ can be added 
	without affecting convexity.
	This also allows to compare~\eqref{eq:stateErrorDynamics} with delay-free dynamics.
\end{rem}

\subsection{Single parameter}\label{sec:single-param}
We first impose that all feedback gains are equal to the parameter $ \alpha > 0 $,
such that~\eqref{eq:feedbackMatrixMulti} simplifies to
\begin{subequations}\label{eq:feedbackMatrixSingle}
	\begin{equation}
		K = K_f(\alpha) + K_f^\top(\alpha)
	\end{equation}
	\begin{equation}
		K_f(\alpha) = \mbox{circ}
		\left(
		n\alpha, -\alpha, \dots, -\alpha, 0, \dots, 0
		\right)
	\end{equation}
\end{subequations}
and the eigenvectors of $ K $ are proportional to $ \alpha $,
\ie $ \lambda_i = g_i\alpha $, $ i = 1,\dots,N $,
where the coefficients $ g_i $ depend on $ N $ and $ n $ according to~\eqref{eq:eigenvaluesCirculant}.\\
The variance minimization~\eqref{eq:problem} is reduced to
\begin{equation}\label{eq:problem-a-sum-recast}
	\begin{aligned}
		\argmin_{\alpha\in\Realp{}} &\quad \sum_{i=2}^{N} \sigma^2_{\textit{ss}}(\lambda_i(\alpha))\\
		\mathrm{s.t.} &\quad \lambda_M(\alpha) < \frac{\pi}{2\taun}
	\end{aligned}
\end{equation}
and the quadratic approximation~\eqref{eq:problem-recast-eig-squares} becomes
\begin{equation}\label{eq:problem-a-sum-recast-eig-squares}
	\begin{aligned}
		\argmin_{\alpha\in\Realp{}}&\quad \sum_{i=2}^N \left(\lambda_i(\alpha)-\opteig\right)^2\\
		\mathrm{s.t.} &\quad \lambda_M(\alpha) < \dfrac{\pi}{2\taun}
	\end{aligned}
\end{equation}
\begin{thm}\label{thm:AlphaSuboptClosedForm}
	The solution of~\eqref{eq:problem-a-sum-recast-eig-squares} is $ \displaystyle \tilde{\alpha}^* = \dfrac{\opteig}{2n+1} $.
\end{thm}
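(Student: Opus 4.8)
The plan is to exploit that, under the single-parameter restriction~\eqref{eq:feedbackMatrixSingle}, the eigenvalues in~\eqref{eq:eigenvaluesCirculant} become linear in $\alpha$. Writing $\lambda_i(\alpha) = \alpha c_i$ with
\[
c_i \doteq 2\sum_{\ell=1}^n\left(1-\cos\left(\tfrac{2\pi(i-1)\ell}{N}\right)\right),
\]
the coefficients $c_i$ are independent of $\alpha$ and $c_1 = 0$. Hence the cost in~\eqref{eq:problem-a-sum-recast-eig-squares} is the scalar quadratic $g(\alpha) = \sum_{i=2}^N(\alpha c_i - \opteig)^2$, which is strictly convex since its leading coefficient $\sum_{i=2}^N c_i^2$ is positive. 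Setting $g'(\alpha) = 0$ yields the unique unconstrained minimizer
\[
\tilde\alpha^* = \opteig\,\frac{\sum_{i=2}^N c_i}{\sum_{i=2}^N c_i^2},
\]
so the theorem reduces to evaluating these two sums and then checking feasibility of $\tilde\alpha^*$.

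The core computation — and the step I expect to be most delicate — is the two sums, which I would evaluate via orthogonality of the discrete characters $i \mapsto \cos(2\pi(i-1)\ell/N)$. Because $1 \le \ell \le n < \nicefrac{N}{2}$, each $\ell$ satisfies $0 < \ell < N$, so $\sum_{i=1}^N \cos(2\pi(i-1)\ell/N) = 0$; since $c_1 = 0$ this already gives $\sum_{i=2}^N c_i = \sum_{i=1}^N 2n = 2nN$. For the second sum I would expand $c_i^2$, in particular the cross term $\big(\sum_\ell \cos(\ell\theta_i)\big)^2 = \sum_{\ell,m}\cos(\ell\theta_i)\cos(m\theta_i)$ with $\theta_i = 2\pi(i-1)/N$, and apply the product-to-sum identity. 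The key point is that $n < \nicefrac{N}{2}$ forces both $\ell + m \le 2n < N$ and $|\ell - m| < N$, so summing over $i$ annihilates every cross term except the diagonal $\ell = m$, each of which contributes $N/2$. Collecting the constant part $4n^2 N$ with the $n$ diagonal contributions $2n\cdot(N/2)$ gives $\sum_{i=2}^N c_i^2 = 4n^2N + 2nN = 2nN(2n+1)$. Substituting both sums yields $\tilde\alpha^* = \opteig\,(2nN)/\big(2nN(2n+1)\big) = \opteig/(2n+1)$, which is where the factor $2n+1$ originates.

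It remains to verify that this unconstrained optimum lies in the open feasible interval, i.e. $\lambda_M(\tilde\alpha^*) < \pi/(2\taun)$, so that it also solves the constrained program. Since $\lambda_M(\alpha) = \alpha\max_i c_i$ and every cosine is at least $-1$, one has $\max_i c_i \le 4n$; combined with $\opteig = \beta^*/\taun$ from~\eqref{eq:optGainExplicit} this gives
\[
\lambda_M(\tilde\alpha^*) \le \frac{4n}{2n+1}\,\opteig < 2\opteig = \frac{2\beta^*}{\taun}.
\]
Finally $\beta^* = \cos\beta^*$ implies $\beta^* < \pi/4$ (as $\pi/4 > \cos(\pi/4) = \tfrac{\sqrt 2}{2}$), so $2\beta^* < \pi/2$ and the constraint holds strictly. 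Thus the feasible minimizer is $\tilde\alpha^* = \opteig/(2n+1)$, as claimed.
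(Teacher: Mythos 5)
Your proof is correct and takes essentially the same route as the paper's: linearize $\lambda_i(\alpha) = c_i\alpha$, minimize the strictly convex scalar quadratic to get $\tilde{\alpha}^* = \opteig\,\bigl(\sum_{i=2}^N c_i\bigr)/\bigl(\sum_{i=2}^N c_i^2\bigr)$, evaluate the two sums as $2nN$ and $2nN(2n+1)$, and verify the constraint via $\opteig < \nicefrac{\pi}{4\taun}$. The only cosmetic differences are that you compute the sums by direct cosine orthogonality where the paper invokes the inverse DFT and Plancherel's theorem (the same computation in different language), and you obtain $\beta^* < \nicefrac{\pi}{4}$ directly from the fixed-point equation $\beta^* = \cos\beta^*$ rather than from the sign of the variance derivative.
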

\begin{proof}
	The result follows by computing the critical points of the cost. 
	See 
	\isExtended{%
		Appendix C in~\cite{2021arXiv210110394B}
	}{%
	\cref{app:alphaSuboptClosedForm}
	}%
	for details.
\end{proof}
In view of~\eqref{eq:optGainExplicit},
\cref{thm:AlphaSuboptClosedForm} implies that the sub-optimal gain (control effort)
decreases as $ \nicefrac{1}{n}\,\cdot\,\nicefrac{1}{f(n)} $,
where the first factor is related to the denser communication topology (more feedback information)
and the second,
which depends on the delay increase rate $ f(n) $,
is a direct consequence of the constraint~\eqref{eq:retarded-eq-steady-state-condition}
that embeds a stability condition.

\begin{figure*}
	\centering
	\begin{minipage}[l]{.33\linewidth}
		\centering
		\includegraphics[width=.8\linewidth]{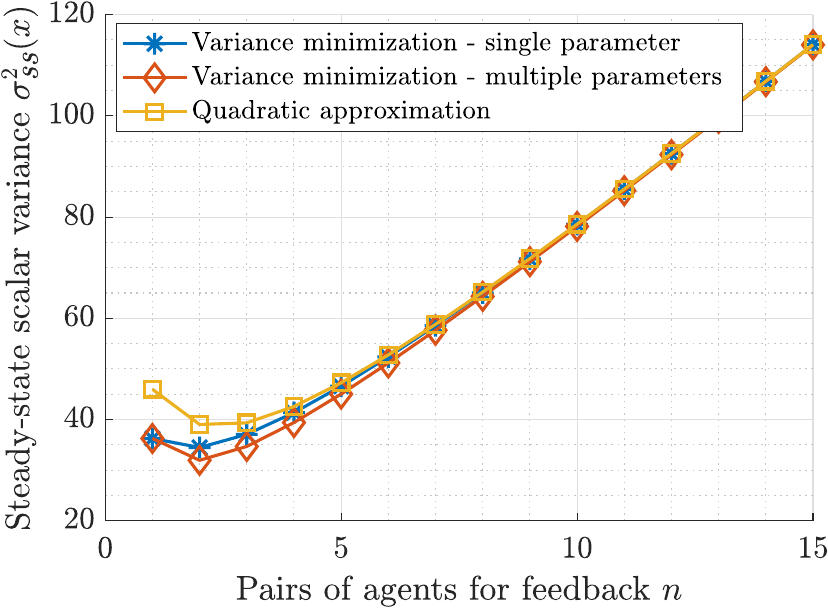}
		\caption{Delay rate $ f(n) = n $.}
		\label{fig:minimization-comp-theo}
	\end{minipage}%
	\hfill
	\begin{minipage}{.33\linewidth}
		\centering
		\includegraphics[width=.8\linewidth]{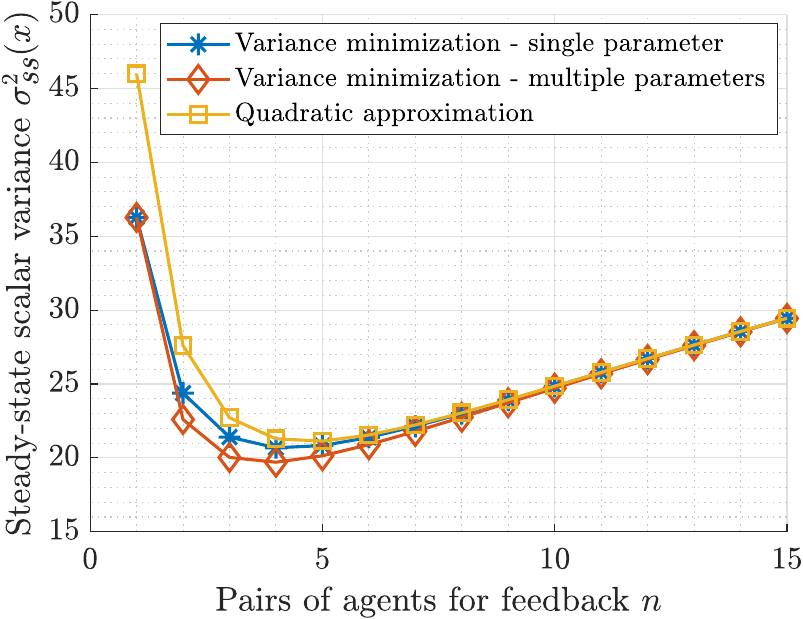}
		\caption{Delay rate $ f(n) = \sqrt{n} $.}
		\label{fig:minimization-comp-sqrt}
	\end{minipage}%
	\hfill
	\begin{minipage}[r]{.33\linewidth}
		\centering
		\includegraphics[width=.8\linewidth]{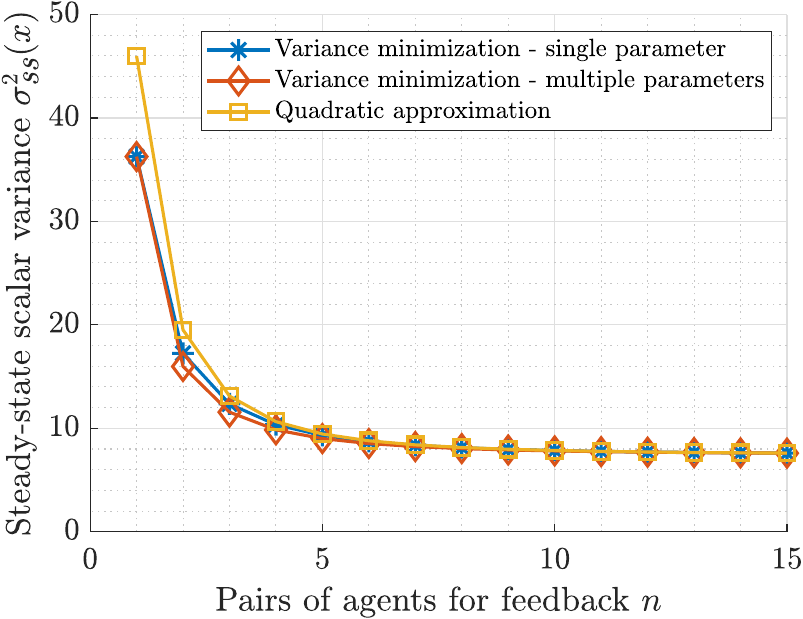}
		\caption{Delay rate $ f(n) \equiv 1 $.}
		\label{fig:minimization-comp-cost}
	\end{minipage}
	\caption{Variance obtained from~\cref{prob} with single (blue) and multiple parameters (red) and
		approximation~\eqref{eq:problem-a-sum-recast-eig-squares} (yellow).}
	\label{fig:minimization-comp}
\end{figure*}

\subsection{multiple parameters}\label{sec:multiple-params}
We now let $ k_j \neq k_i $ such that $ k $ can be any vector in $ \Real{n} $.
\begin{thm}\label{thm:suboptKClosedForm}
	The solution of~\eqref{eq:problem-recast-eig-squares} is $ \tilde{k}^* = \mathds{1}_n\tilde{\alpha}^* $.
\end{thm}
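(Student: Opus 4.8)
The plan is to treat~\eqref{eq:problem-recast-eig-squares} as a constrained convex least-squares problem and to exhibit $\mathds{1}_n\tilde\alpha^*$ as its (feasible) unconstrained minimizer. By~\eqref{eq:eigenvaluesCirculant} each eigenvalue is \emph{linear} in the gain vector, $\lambda_i(k)=\sum_{\ell=1}^n c_{i\ell}k_\ell$ with $c_{i\ell}=2(1-\cos(2\pi(i-1)\ell/N))$; hence the objective $\sum_{i=2}^N(\lambda_i(k)-\opteig)^2$ is a convex quadratic and the constraint $\lambda_M(k)<\pi/(2\taun)$, being a pointwise maximum of affine functions bounded above, carves out a convex set. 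First I would drop the constraint, find the unique unconstrained minimizer, and then show it is feasible, which makes it the constrained minimizer as well.

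Setting the gradient of the objective to zero yields the normal equations $Ak=b$ with $A_{m\ell}=\sum_{i=2}^N c_{im}c_{i\ell}$ and $b_m=\opteig\sum_{i=2}^N c_{im}$. The evaluation hinges on two facts. First, $c_{1\ell}=0$ for every $\ell$, so the sums may be extended to include $i=1$ without change. Second, since $1\le m,\ell\le n<\nicefrac{N}{2}$ forces $2n<N$, the integers $m-\ell$ and $m+\ell$ are never nonzero multiples of $N$ unless $m=\ell$; the orthogonality of the discrete cosines over a full period then gives $\sum_{i=1}^N\cos(2\pi(i-1)m/N)=0$ and $\sum_{i=1}^N\cos(2\pi(i-1)m/N)\cos(2\pi(i-1)\ell/N)=\tfrac N2\,\delta_{m\ell}$. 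Expanding the products in $A_{m\ell}$ and $b_m$ with these identities collapses everything to $A=2N\bigl(2\,\mathds{1}_n\mathds{1}_n^\top+I_n\bigr)$ and $b=2N\opteig\,\mathds{1}_n$. This orthogonality bookkeeping---in particular checking that the cross term indexed by $m+\ell$ never wraps around to a multiple of $N$, which is exactly where $n<\nicefrac{N}{2}$ enters---is the main obstacle; the rest is routine.

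Because $A$ is the sum of the positive-definite $2N I_n$ and the positive-semidefinite $4N\,\mathds{1}_n\mathds{1}_n^\top$, it is positive definite, so the objective is strictly convex and $Ak=b$ has a unique solution. I would then verify by inspection that $k=\mathds{1}_n\tilde\alpha^*$ with $\tilde\alpha^*=\opteig/(2n+1)$ solves it: using $(2\,\mathds{1}_n\mathds{1}_n^\top+I_n)\mathds{1}_n=(2n+1)\mathds{1}_n$ gives $A\,\mathds{1}_n\tilde\alpha^*=2N(2n+1)\tilde\alpha^*\mathds{1}_n=2N\opteig\,\mathds{1}_n=b$. Finally, this minimizer lies on the single-parameter diagonal $k=\alpha\mathds{1}_n$ and, by~\cref{thm:AlphaSuboptClosedForm}, the value $\tilde\alpha^*$ already satisfies the identical constraint $\lambda_M<\pi/(2\taun)$; hence $\mathds{1}_n\tilde\alpha^*$ is feasible. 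A feasible global minimizer of a strictly convex objective over a convex set is the constrained minimizer, so $\tilde k^*=\mathds{1}_n\tilde\alpha^*$, as claimed.
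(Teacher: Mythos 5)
Your proof is correct, but it takes a genuinely different route from the paper's. The paper does not touch the normal equations at all: it first pushes the objective through the DFT and Plancherel's theorem, recasting~\eqref{eq:problem-recast-eig-squares} as a least-squares problem $\lVert r(k)-\opteig e_1\rVert_2^2$ over the first row $r(k)$ of $K$, and then uses an exchange argument --- if one off-diagonal entry $x_j$ of a candidate optimal row differed from the common value $\bar{x}$ of the others, replacing all of them by a common value (with the diagonal adjusted to keep the row sum zero) strictly lowers the cost, since $[(n-1)\bar{x}+x_j]^2 < n[(n-1)\bar{x}^2+x_j^2]$ is equivalent to $0<(\bar{x}-x_j)^2$. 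This forces the optimal gains to be equal \emph{before} knowing their value, and the constant is then delegated to~\cref{thm:AlphaSuboptClosedForm}. You instead work directly in gain space: you assemble the Gram matrix and right-hand side, and your orthogonality bookkeeping is exactly right --- $|m-\ell|\le n-1$ and $2\le m+\ell\le 2n<N$, so only the $m=\ell$ diagonal term survives, which is precisely where the hypothesis $n<\nicefrac{N}{2}$ from~\cref{ass:hypothesis} enters --- yielding $A=2N(2\,\mathds{1}_n\mathds{1}_n^\top+I_n)$ and $b=2N\opteig\mathds{1}_n$, whence $\mathds{1}_n\tilde{\alpha}^*$ solves $Ak=b$. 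Your handling of the constraint is also sound: the feasible set is open and convex, feasibility of $\tilde{\alpha}^*$ is inherited from~\cref{thm:AlphaSuboptClosedForm} (whose proof in Appendix~\ref{app:alphaSuboptClosedForm} contains the explicit check $\tilde{\lambda}_M^*<\pi/(2\taun)$), and a feasible unconstrained minimizer of a strictly convex quadratic is the constrained one. As for what each approach buys: yours is self-contained and quantitative, establishing strict convexity ($A\succ 0$), uniqueness, and the value $\tilde{\alpha}^*=\opteig/(2n+1)$ in one stroke --- indeed $\mathds{1}_n^\top A\mathds{1}_n = N(4n^2+2n)$ and $\mathds{1}_n^\top b = 2Nn\,\opteig$ reproduce the paper's identities~\eqref{eq:sumOfDFT}--\eqref{eq:powerOfDFT} --- at the price of the trigonometric computation; the paper's argument avoids computing any Gram entries and exposes the structural reason the optimum is symmetric, but needs the Plancherel reformulation and still leans on Theorem~\ref{thm:AlphaSuboptClosedForm} to pin down the constant.
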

\begin{proof}
	The result follow from properties of the DFT applied to the cost function.
	See 
	\isExtended{%
		Appendix D in~\cite{2021arXiv210110394B}
	}{%
	\cref{app:suboptKClosedForm}
	}%
	for details.
\end{proof}
\cref{thm:suboptKClosedForm} states that optimizations~\eqref{eq:problem-recast-eig-squares} and~\eqref{eq:problem-a-sum-recast-eig-squares} coincide,
\ie choosing the same gain for all agents is a good choice even if different gains are allowed.
~\autoref{fig:minimization-comp} compares the scalar variances obtained
from the two variance minimizations~\eqref{eq:problem-a-sum-recast} and~\eqref{eq:problem}
and from the quadratic approximation~\eqref{eq:problem-a-sum-recast-eig-squares}
with $ N = 50 $, $ \tau_{\textit{min}} = 0.01 $ and three rates $ f(n) $.
\begin{figure}
	\centering
	\includegraphics[width=.7\linewidth]{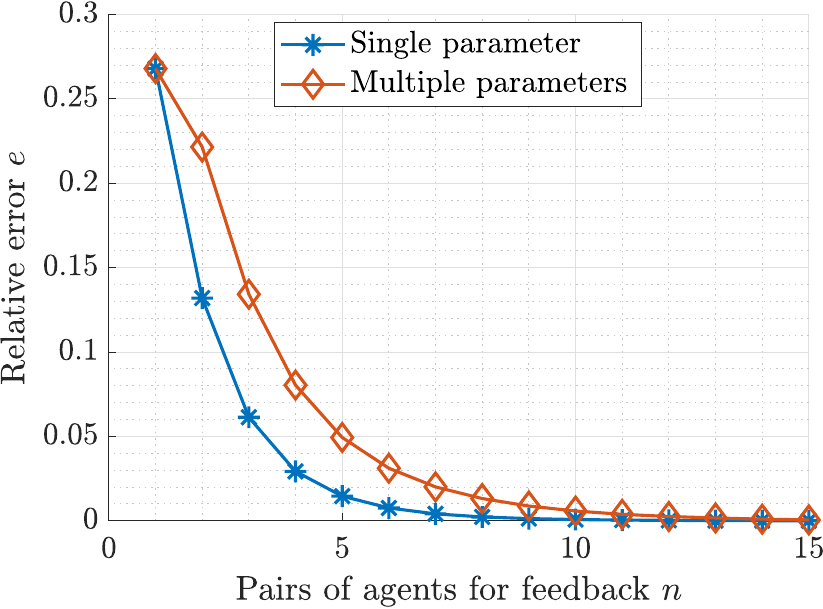}
	\caption{Relative error due to the quadratic approximation~\eqref{eq:problem-a-sum-recast-eig-squares}.}
	\label{fig:relative-error}
\end{figure}
~\autoref{fig:relative-error} shows the relative error (gap with minimum variance) due to the approximation,
defined as
\begin{equation}\label{eq:relativeError}
	e \doteq \dfrac{\optvarx-\sigma_{x,\textit{ss}}^{2,*}}{\sigma_{x,\textit{ss}}^{2,*}}
\end{equation}
where $ \sigma_{x,\textit{ss}}^{2,*} $ is the optimal value of~\eqref{eq:problem-a-sum-recast} or~\eqref{eq:problem}
and $ \optvarx $ is the variance obtained with the quadratic-optimal gain $ \tilde{\alpha}^* $.

\section{discussion: the role of latency in decentralized control}\label{sec:optimal-n}

The gains obtained by solving~\cref{prob}
or its quadratic approximation~\eqref{eq:problem-a-sum-recast-eig-squares},
and the resulting steady-state variance,
depend on  the pairs of communicating agents $ n $ and the delay $ \taun $.
The following result quantifies such dependence for the
solution of the approximated problem~\eqref{eq:problem-a-sum-recast-eig-squares}.
\begin{prop}\label{lem:optVarianceExplicit}
	The scalar variance $ \optvarx $ 
	can be written as
	\begin{equation}\label{eq:optVarianceExplicitNonlinearDelay}
		\optvarx(n) = \tilde{C}^*(n)f(n)\tau_{\textit{min}}
	\end{equation}
	where $ \tilde{C}^*(n) $ does not depend on $ f(n) $ and can be computed exactly.
	The network topology is optimized by
		\begin{equation}\label{eq:optVarOptimalN}
			n^* = \argmin_{n\in\mathbb{N}}\;\optvarx(n)
		\end{equation}
\end{prop}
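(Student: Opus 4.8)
The plan is to establish the closed form \eqref{eq:optVarianceExplicitNonlinearDelay} first and then exploit it to localize the minimizer. For the closed form I would substitute the near-optimal single-parameter gain $\tilde{\alpha}^* = \opteig/(2n+1)$ of \cref{thm:AlphaSuboptClosedForm} into the eigenvalues \eqref{eq:eigenvaluesCirculant}, recalling from \cref{prop:optimumGain} that $\opteig = \beta^*/\taun$ with $\taun = f(n)\tau_{\textit{min}}$. The key observation is that the product $\tilde{\alpha}^*\taun = \beta^*/(2n+1)$ carries no dependence on the delay rate $f(n)$, so each scaled eigenvalue
\[
  \mu_i(n) \doteq \lambda_i(\tilde{\alpha}^*)\,\taun = \frac{2\beta^*}{2n+1}\sum_{\ell=1}^{n}\left(1-\cos\left(\frac{2\pi(i-1)\ell}{N}\right)\right)
\]
depends only on $n$ (and on $i$, $N$). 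Writing $\lambda_i(\tilde{\alpha}^*) = \mu_i(n)/\taun$ in \eqref{eq:variance} turns each summand of \eqref{eq:varianceChangeOfBasis} into $\sigma^2_{\textit{ss}}(\lambda_i) = \taun\,(1+\sin\mu_i)/(2\mu_i\cos\mu_i)$, and factoring $\taun = f(n)\tau_{\textit{min}}$ out of the sum yields \eqref{eq:optVarianceExplicitNonlinearDelay} with $C^*(n) = \sum_{i=2}^{N}(1+\sin\mu_i(n))/(2\mu_i(n)\cos\mu_i(n))$, a finite, explicitly computable sum that is manifestly independent of $f$.

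For the bound $n^*\le n_{\textit{max}}$ I would combine a per-mode lower bound with saturation of the link set. Introduce $g(\mu)\doteq(1+\sin\mu)/(2\mu\cos\mu)$, so that $C^*(n)=\sum_{i=2}^{N}g(\mu_i(n))$. By \cref{prop:optimumGain}, together with the strict convexity of \cref{lemma:convexVariance}, $g$ attains its unique minimum over $(0,\pi/2)$ at $\mu=\beta^*$, with value $g(\beta^*) = (1+\sin\beta^*)/(2\cos^2\beta^*)$ after using $\beta^*=\cos\beta^*$. Since the feasibility constraint $\lambda_M\taun<\pi/2$ holds at $\tilde{\alpha}^*$ (indeed $\mu_i<2\beta^*<\pi/2$, as checked directly), every $\mu_i$ lies in $(0,\pi/2)$ and one obtains the uniform bound $C^*(n)\ge(N-1)\,g(\beta^*)$ for all $n$. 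I would then show this bound is tight at $n=n_{\textit{max}}$: for odd $N$ the formation is fully connected there, and the identity $\sum_{\ell=1}^{n_{\textit{max}}}(1-\cos(2\pi(i-1)\ell/N))=N/2$ for $i\ge2$ forces $\mu_i(n_{\textit{max}})=\beta^*$ for every nontrivial mode, whence $C^*(n_{\textit{max}})=(N-1)\,g(\beta^*)$.

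Finally, for any $n>n_{\textit{max}}$ no new feedback links can be added, because at $n=n_{\textit{max}}$ each agent already receives from all $N-1$ others; the per-mode bound therefore still gives $C^*(n)\ge(N-1)g(\beta^*)=C^*(n_{\textit{max}})$, while $f(n)\ge f(n_{\textit{max}})$ because $f$ is non-decreasing. Multiplying the two, $\optvarx(n)=C^*(n)f(n)\tau_{\textit{min}}\ge(N-1)g(\beta^*)f(n_{\textit{max}})\tau_{\textit{min}}=\optvarx(n_{\textit{max}})$, so the minimizer cannot exceed $n_{\textit{max}}$, proving the claim.

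I expect the only delicate point to be the attainment step for even $N$: there the single antipodal neighbor at $\ell=N/2$ is not part of a symmetric pair, so the cosine identity must be adjusted and the nontrivial spectrum splits into two values rather than collapsing onto $\beta^*$. This is harmless for the bound, however, since \cref{ass:hypothesis} already enforces $n<N/2$, i.e.\ $n\le n_{\textit{max}}-1$, for even $N$; the conclusion $n^*\le n_{\textit{max}}$ then holds trivially on the feasible range, and the saturation argument of the preceding paragraph governs the formal extension to larger $n$, where the complete-graph spectrum is fixed and only $f(n)$ grows.
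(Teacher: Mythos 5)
Your derivation of \eqref{eq:optVarianceExplicitNonlinearDelay} is correct and coincides with the paper's own proof in Appendix~E: there, too, one writes $\tilde{\alpha}^* = c^*(n)\opteig$ with $c^*(n)=\nicefrac{1}{(2n+1)}$, uses $\opteig\taun=\beta^*$ from \eqref{eq:optGainExplicit} so that each modal argument $a_i^*(n)\beta^*$ (your $\mu_i(n)$) carries no dependence on $f(n)$, and factors $\taun=f(n)\tau_{\textit{min}}$ out of the sum \eqref{eq:scalarVarianceSum}. The genuine divergence is in the second claim. The paper's appendix does not argue $n^*\le n_{\textit{max}}$ at all: it is implicitly read off from \cref{ass:hypothesis}, which restricts feasible topologies to $n<\nicefrac{N}{2}$, together with the tacit convention that the graph saturates for larger $n$ while $f$ keeps growing. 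You instead prove it, and prove something stronger along the way. Your per-mode bound $g(\mu_i)\ge g(\beta^*)$ is legitimate, since the feasibility check in Appendix~C gives $0<\mu_i\le\frac{4n}{2n+1}\beta^*<2\beta^*<\nicefrac{\pi}{2}$ (using $\beta^*=\cos\beta^*<\nicefrac{\pi}{4}$, and $\mu_i>0$ for $i\ge2$ because the $\ell=1$ term $1-\cos(2\pi(i-1)/N)$ is already positive); and your identity $\sum_{\ell=1}^{(N-1)/2}\bigl(1-\cos(2\pi(i-1)\ell/N)\bigr)=\nicefrac{N}{2}$ is correct (the cosine sum over $\ell=1,\dots,N-1$ equals $-1$ and splits symmetrically), so for odd $N$ the relation $2n_{\textit{max}}+1=N$ collapses every nontrivial mode exactly onto $\mu_i(n_{\textit{max}})=\beta^*$. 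This establishes $C^*(n)\ge C^*(n_{\textit{max}})=(N-1)\frac{1+\sin\beta^*}{2\cos^2\beta^*}$ for all $n$, i.e.\ that $n_{\textit{max}}$ globally minimizes $C^*$ for odd $N$ --- an analytic fragment of the decreasing behavior of $C^*(n)$ that the paper only verifies numerically and explicitly calls hard to prove. Your even-$N$ handling is also sound: feasibility forces $n\le\nicefrac{N}{2}-1$, so the stated bound is vacuous there, and for the formal extension of the argmin over $\mathbb{N}$ the saturation-plus-monotone-$f$ step alone suffices, without the tightness computation. Finally, you correctly avoided the tempting overstatement $n^*=n_{\textit{max}}$: for $n<n_{\textit{max}}$ one may have $f(n)<f(n_{\textit{max}})$, so the inequality $C^*(n)\ge C^*(n_{\textit{max}})$ dominates only in the direction $n>n_{\textit{max}}$, which is exactly --- and only --- how your chain uses it.
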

\begin{proof}
	The result follows from straightforward manipulations of the analytical expression of $ \optvarx $.
	See
	\isExtended{%
		Appendix E in~\cite{2021arXiv210110394B}
		}{%
		\cref{app:optVarianceExplicit}
	}%
	for details.
\end{proof}
Given how $ \tilde{C}^*(n) $ is computed,
characterizing $ \optvarx(n) $ analytically is hard.
Numerical tests 
show that $ \tilde{C}^*(n) $ is submodular decreasing
and that the steady-state variance has a unique point of minimum when $ f(\cdot) $ is discrete-concave,
as shown in~\autoref{fig:minimization-comp}.
\cref{eq:optVarianceExplicitNonlinearDelay} suggests that the optimal
number of links is smaller than or equal to the maximum,
\ie $ n^* \le \floor{\nicefrac{N}{2}}$, 
with the centralized control (corresponding to the complete graph, $ n = \floor{\nicefrac{N}{2}}$)
performing poorly in general.
Also, $ n^* $ depends on the rate $ f(n) $ in a \textquotedblleft non-increasing" fashion,
namely, slower rates yield larger optima.
For example,
the optimal number of links is greater with $ f(n) = \sqrt{n} $ ($ n^* = 5 $,~\autoref{fig:minimization-comp-sqrt})
than with $ f(n) = n $ ($ n^* = 2 $,~\autoref{fig:minimization-comp-theo}).
Centralized control is optimal
when the delay is constant (\autoref{fig:minimization-comp-cost}),
because in this case adding links does not penalize the dynamics.\footnote{Although the same holds true with $ \taun \equiv 0 $, in this case the computations in Appendix E cannot be applied.}

\begin{figure}
	\centering
	\includegraphics[width=0.6\linewidth]{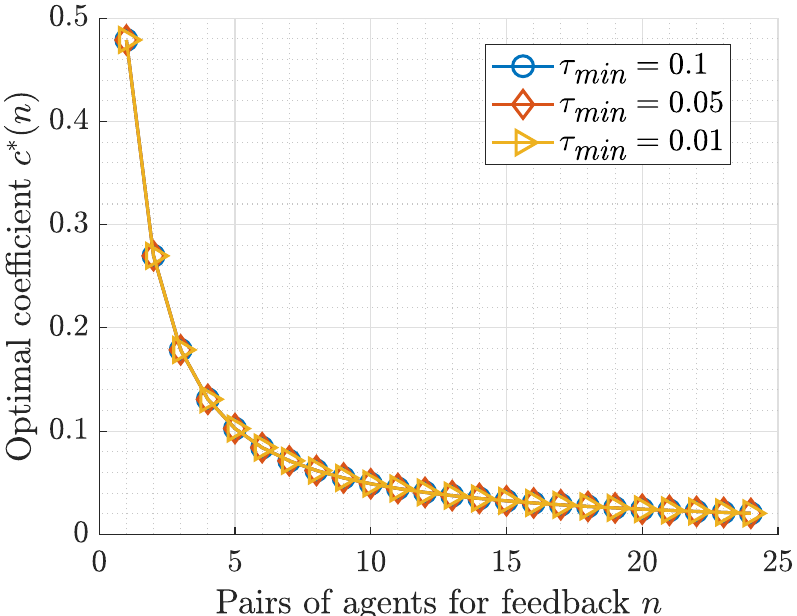}
	\caption{Ratio $ \nicefrac{\alpha^*}{\opteig} $ with different values of $ \tau_{\textit{min}} $.}
	\label{fig:optalphacoeff}
\end{figure}

What discussed so far involves the performance
when the gains are computed according to~\cref{thm:AlphaSuboptClosedForm}.
As shown in the proof of~\cref{lem:optVarianceExplicit},
the key argument to~\eqref{eq:optVarianceExplicitNonlinearDelay} is that
$ \tilde{\alpha}^* $ is linear in $ \opteig $
through a coefficient $ \tilde{c}^*(n) $ that only depends on $ n $. 
Even though proving the same analytically for the optimal gains of~\cref{prob} is hard, 
numerical evidence suggests that this is indeed the case.
For example,~\autoref{fig:optalphacoeff} shows the ratio $ \nicefrac{\alpha^*}{\opteig} $, 
which is independent of both $ \tau_{\textit{min}} $ and $ f(n) $
and hence is consistent with the guess $ \alpha^* = c^*(n)\opteig $.
Simulations show that the relative gap depicted in~\autoref{fig:relative-error}
is also independent of $ f(n) $,
reinforcing such a hypothesis:
in fact, \eqref{eq:relativeError} can be rewritten as 
\begin{equation}\label{eq:relative-error-rewritten}
	e(n) = \dfrac{\tilde{C}^*(n)}{C^*(n)} - 1
\end{equation}
only if the optimal variance can be expressed as $ \sigma_{x,\textit{ss}}^{2,*} = C^*(n)f(n)\tau_{\textit{min}} $.
This suggests that the minimum variance obtained from~\eqref{eq:problem}--\eqref{eq:problem-a-sum-recast}
can also be optimized over $ n $,
as show by~\autoref{fig:minimization-comp},
for both the single-gain and multiple-gain designs.
The optimal network topology 
trades the feedback information (maximized by the centralized architecture)
for the latency (minimized by the fully distributed control):
the optimizer $ n^* $ represents a threshold
beyond which the benefit of adding communication links
is overwhelmed by the delay increase in the retarded dynamics.
\begin{rem}\label{rem:varianceDependsOnN}
	All above results assume a fixed system size $ N $ for ease of notation.
	As one can notice from the proof, the coefficient $ \tilde{C}^*(n) $ is parametric in $ N $,
	and therefore the optimal values $ n^* $ and $ \optvarx(n^*) $ in~\eqref{eq:optVarOptimalN} also depend on $ N $.
\end{rem}
\begin{rem}\label{rem:minVarDoesNotDependOnTauMin}
	\cref{eq:optVarianceExplicitNonlinearDelay} shows that $ n^* $ does not depend on $ \tau_{\textit{min}} $.
	In particular, the optimal topology is only determined by the delay increase rate $ f(n) $
	and by the system size $ N $.
\end{rem}

\section{Conclusions}\label{sec:conclusions}

In this paper, motivated by delay networked control systems,
we design minimum-variance feedback control
assuming that the latency in information exchange increases with the amount of links,
\ie the number of agents involved in the decentralized feedback loops.
Sub-optimal gains,
whose performance tends to the optimum as the network topology becomes denser,
are characterized by a simple expression.
We show that the network topology can be optimized over the number of links,
and that centralized control yields poor performance in general.
Future improvements might involve 
a more complex model for the system dynamics
or heterogeneous agents with different delays.
	
	\isExtended{}{
		\appendix
		\titleformat{\subsection}
		{\normalfont\normalsize\itshape}
		{\thesubsection}
		{1em}
		{}
		
		\titleformat{\section}
		{\normalfont\normalsize\centering\scshape}
		{\thesection}
		{1em}
		{}

\subsection{Proof of~\cref{lemma:convexVariance}}\label{app:convexVariance}

\begin{figure*}
	\centering
	\begin{equation}\label{eq:variance2ndDerivative}
		\dfrac{d^2\sigma_{\textit{ss}}^2}{da^2} =
		\dfrac{(1+2\sin a + a\cos a-\cos(2a))a^2\cos^2a+(a-\cos a+a\sin a-\sin a\cos a)(-2a\cos^2a+2a^2\cos a\sin a)}{a^4\cos^4a}
	\end{equation}
	\noindent\makebox[\linewidth]{\rule{\linewidth}{0.4pt}}
\end{figure*}

Because $ \tau $ does not impact convexity,
we let $ \tau = 1 $.
The second derivative of $ \sigma_{\textit{ss}}^2(a) $, 
written in~\eqref{eq:variance2ndDerivative} in the next page,
is positive if
\begin{multline}\label{eq:positiveDerivative}
	a^3\cos^3a+2a\cos^3a+2a^3\cos a\sin a+2a\cos^3a\sin a+\\
	+2a^3\cos a\sin^2a-2a^2\cos^2a\sin a-2a^2\cos^2a > 0
\end{multline}
The constraint~\eqref{eq:retarded-eq-steady-state-condition} ensures that $ a $, $ \cos a $ and $ \sin a $ are positive.
We now consider three cases for $ a $ and
show that the negative terms in~\eqref{eq:positiveDerivative} are always outbalanced by the positive terms.
\begin{description}[leftmargin=*]
	\item[\boldmath$ a > 1 $:]
		\begin{gather}\label{eq:convexVarianceCase1}
			2a^3\cos a\sin^2a > 2a^2\cos^2a\sin a\\
			2a^3\cos a\sin a > 2a^2\cos^2a
		\end{gather}
	\item[\boldmath$ \cos a < a \le 1 $:]
		\begin{gather}\label{eq:convexVarianceCase2}
			2a^3\cos a\sin a> 2a^2\cos^2a\sin a\\
			2a^3\cos a\sin^2a + 2a\cos^3a \ge 2a^3\cos a > 2a^2\cos^2a
		\end{gather}
	\item[\boldmath$ a \le \cos a $:]
		\begin{gather}\label{eq:convexVarianceCase3}
			2a\cos^3a\sin a \ge 2a^2\cos^2a\sin a\\
			2a\cos a^3 \ge 2a^2\cos^2a
	\end{gather}
\end{description}

\subsection{Proof of~\cref{prop:optimumGain}}\label{app:optGainExplicit}

Uniqueness of $ \opteig $ follows from strict convexity.
The derivative of the variance is
\begin{equation}\label{eq:varianceDerivativeToZero}
	\dfrac{d\sigma_{\textit{ss}}^2}{d\lambda} = 
	\dfrac{\tau\lambda-\cos(\tau\lambda)+\tau\lambda\sin(\tau\lambda)-\cos(\tau\lambda)\sin(\tau\lambda)}{2\lambda^2\cos^2(\tau\lambda)}
\end{equation}
which is equal to zero if and only if
\begin{equation}\label{eq:varianceDerivativeNumToZero}
	(1 + \sin(\tau\lambda))(\tau\lambda - \cos(\tau\lambda)) = 0
\end{equation}
Because of the constraint in~\eqref{eq:opt-var-eig},
\eqref{eq:varianceDerivativeNumToZero} is satisfied if and only if $ \tau\lambda = \cos(\tau\lambda) $.
Applying the change of variable $ \beta \gets \tau\lambda $,
the resulting equation admits a unique solution in $ (0,\nicefrac{\pi}{2}) $.

\subsection{Proof of~\cref{thm:AlphaSuboptClosedForm}}\label{app:alphaSuboptClosedForm}

According to~\eqref{eq:eigenvaluesCirculant}, we first rewrite each eigenvalue as $ \lambda_i = g_i \alpha $.
The cost function in~\eqref{eq:problem-a-sum-recast-eig-squares} can then be rewritten as
\begin{equation}\label{eq:costQuadraticRewritten}
	C(\alpha) = \sum_{i=2}^N \left(g_i\alpha - \opteig\right)^2
\end{equation}
We first exploit strict convexity to find the global minimum of~\eqref{eq:costQuadraticRewritten}.
To this aim, we set
\begin{equation}\label{eq:costAlphaDerivative}
	C'(\alpha) = 2\sum_{i=2}^N g_i\left(g_i\alpha - \opteig\right) = 0
\end{equation}
which admits the unique solution
\begin{equation}\label{eq:optAlpha}
	\tilde{\alpha}^* = \dfrac{\sum_{i=2}^N g_i}{\sum_{i=2}^N g_i^2}\,\opteig
\end{equation}
The coefficients $ g_i $ are the eigenvalues of $ K $ when $ \alpha = 1 $.
Because the eigenvalues of a circulant matrix are the Discrete Fourier Transform (DFT) of its first row,
we can write
\begin{gather}
	\sum_{i=2}^N g_i = \sum_{i=1}^N g_i = N\sum_{i=1}^N r_i = 2Nn \label{eq:sumOfDFT}\\
	\sum_{i=2}^N g_i^2 = \sum_{i=1}^N g_i^2 = N\sum_{i=1}^N r_i^2 = N(4n^2+2n) \label{eq:powerOfDFT}
\end{gather}
where $ r $ is the first row of $ K $,~\eqref{eq:sumOfDFT} comes from the definition of inverse DFT and~\eqref{eq:powerOfDFT} from Plancherel theorem.
The final expression of $ \tilde{\alpha}^* $ follows by substituting~\eqref{eq:sumOfDFT}--\eqref{eq:powerOfDFT} in~\eqref{eq:optAlpha}.\\
We now need to check if such solution satisfies the constraint.
We first note that $ \opteig < \nicefrac{\pi}{4\taun} $
by studying the sign of~\eqref{eq:varianceDerivativeToZero}. 
We then have the following relations for the maximum eigenvalue: 
\begin{align}
	\begin{split}
		\tilde{\lambda}_M^* &= g_M\tilde{\alpha}^* = 2\tilde{\alpha}^*\left(n-\sum_{\ell=1}^n\cos\left(\dfrac{2\pi (M-1)\ell}{N}\right)\right) \\
							&= \dfrac{2\opteig}{2n+1}\left(n-\sum_{\ell=1}^n\cos\left(\dfrac{2\pi (M-1)\ell}{N}\right)\right) \\
							&< \dfrac{\pi}{4\taun}\dfrac{4n}{2n+1} < \dfrac{\pi}{2\taun}
	\end{split}
\end{align}

\subsection{Proof of~\cref{thm:suboptKClosedForm}}\label{app:suboptKClosedForm}

Exploiting linearity of the DFT and Plancherel theorem,
problem~\eqref{eq:problem-recast-eig-squares} can be recast as follows:
\begin{equation}\label{eq:k-sum-simplified}
	\begin{aligned}
		\argmin_{k\in\Real{n}} &\quad \lVert r(k) - \opteig e_1 \rVert^2_2\\
		\mathrm{s.t.} &\quad \lambda_M(k) < \dfrac{\pi}{2\taun}
	\end{aligned}
\end{equation}
where $ r^\top(k) $ is the first row of $ K $
and $ e_\ell $ is the $ \ell $-th canonical vector in $ \Real{N} $.
Given $ \mathcal{I} \subset [N] $, consider now the problem
\begin{equation}\label{eq:probDFT}
	\begin{aligned}
		x^* = \argmin_{x\in\Real{N}} &\quad \lVert x - e_\ell \rVert^2_2\\
		\mathrm{s.t.} 	&\quad x_i = 0 \quad \forall i \in \mathcal{I}\\
						&\quad \sum_{i\notin\mathcal{I}} x_i = 0
	\end{aligned}
\end{equation}
with $\ell\notin\mathcal{I} $ and $ |\mathcal{I}| = N-n-1 $.
We show that \linebreak $ x_i^* \equiv \bar{x}^* \, \forall i \notin \mathcal{I}\cup\{\ell\}$,
which implies that the optimal gains $ k^*_i $ in $ r(k^*) $ are equal.
Assume there exists $ j\notin\mathcal{I}\cup\{\ell\} $ such that
$ x_j \neq x_i \equiv \bar{x} $ for all $ i\notin\mathcal{I}\cup\{j,\ell\} $.
The cost of $ x $ is
\begin{equation}\label{eq:costX}
	C_{x} = \lVert x - e_\ell \rVert^2_2 = \left(x_\ell-1\right)^2 + (n-1)\bar{x}^2 + x_j^2
\end{equation}
Let $ \tilde{x} $ such that $ \tilde{x}_\ell = x_\ell $, $ \tilde{x}_i \equiv \bar{\tilde{x}} $ for all $ i \notin \mathcal{I}\cup\{\ell\} $.
We have
\begin{equation}\label{eq:xTildeEll}
	\begin{aligned}
		x_\ell 			&= -x_j - \sum_{i\neq\ell,j} x_i = -x_j - (n-1)\bar{x}\\
		\tilde{x}_\ell  &= -\sum_{i\neq\ell} \tilde{x}_i = -n\bar{\tilde{x}}
	\end{aligned}
\end{equation}
and the cost associated with $ \tilde{x} $ is
\begin{equation}\label{eq:costXTilde}
	C_{\tilde{x}} = \left(\tilde{x}_\ell-1\right)^2 + n\bar{\tilde{x}}^2 = \left(x_\ell-1\right)^2 + \dfrac{[(n-1)\bar{x}+x_j]^2}{n}
\end{equation}
We then have the following chain of inequalities:
\begin{align}
	\begin{split}
		[(n-1)\bar{x}+x_j]^2 &< n[(n-1)\bar{x}^2+x_j^2]\\
		2(n-1)\bar{x}x_j &< (n-1)\left(\bar{x}^2+x_j^2\right)\\
		0 &< \left(\bar{x}-x_j\right)^2
	\end{split}
\end{align}
which implies $ C_{\tilde{x}} < C_x $ for any $ x,\tilde{x} $.\\
The statement follows from optimality of $ \tilde{\alpha}^* $ in~\eqref{eq:problem-a-sum-recast-eig-squares}.

\subsection{Proof of~\cref{lem:optVarianceExplicit}}\label{app:optVarianceExplicit}

The proof follows straightforward manipulations of the analytical expression of $ \optvarx $.
From~\eqref{eq:varianceChangeOfBasis}, we have
\begin{equation}\label{eq:scalarVarianceSum}
	\optvarx = \sum_{i=2}^N \sigma_{\textit{ss}}^2(\tilde{\lambda}_i^*)
\end{equation}
where $ \tilde{\lambda}_i^* = g_i(n) \tilde{\alpha}^* $ are the near-optimal eigenvalues (c.f. Appendix~\ref{app:alphaSuboptClosedForm}).
Consider the formula in~\cref{thm:AlphaSuboptClosedForm} for the gain $ \tilde{\alpha}^* $,
which we rewrite as $ \tilde{\alpha}^* = \tilde{c}^*(n)\opteig $.
Then, each variance $ \sigma_{\textit{ss}}^2(\tilde{\lambda}_i^*) $ can be written as follows:
\begin{align}\label{eq:optimalVarianceLambda_i}
	\begin{split}
		\sigma_{\textit{ss}}^2(\tilde{\lambda}_i^*) &= \dfrac{1+\sin(\tilde{\lambda}_i^* \taun)}{2\tilde{\lambda}_i^*\cos(\tilde{\lambda}_i^* \taun)} \\
													&= \dfrac{1+\sin(g_i(n)\tilde{c}^*(n)\opteig \taun)}{2g_i(n)\tilde{c}^*(n)\opteig\cos(g_i(n)\tilde{c}^*(n)\opteig \taun)} \\
													&= \dfrac{1+\sin(a_i^*(n)\beta^*)}{2a_i^*(n)\beta^*\cos(a_i^*(n)\beta^*)}\taun = \tilde{C}_i^*(n)\taun
	\end{split}
\end{align}
where~\eqref{eq:optGainExplicit} has been used,
$ a_i^*(n) \doteq g_i(n)\tilde{c}^*(n) $ and
$ \tilde{C}_i^*(n) $ multiplies $ \taun $ in the third line of~\eqref{eq:optimalVarianceLambda_i}.
The scalar variance can then be written as
\begin{equation}\label{eq:scalarVarianceSumExplicit}
	\optvarx = \sum_{i=2}^N \tilde{C}_i^*(n)\taun = \tilde{C}^*(n)f(n)\tau_{\textit{min}}
\end{equation}
where $ \displaystyle \tilde{C}^*(n) \doteq \sum_{i=2}^N \tilde{C}_i^*(n) $.

	}
	
	\isExtended{%
		\bibliographystyle{plain}
		\bibliography{bibfile}

\begin{thebibliography}{10}

\bibitem{arbanas2018decentralized}
Barbara Arbanas, Antun Ivanovic, Marko Car, Matko Orsag, Tamara Petrovic, and
  Stjepan Bogdan.
\newblock Decentralized planning and control for uav--ugv cooperative teams.
\newblock {\em Autonomous Robots}, 42(8):1601--1618, 2018.

\bibitem{arranz2009translation}
Lara~Brin{\'o}n Arranz, Alexandre Seuret, and Carlos~Canudas De~Wit.
\newblock Translation control of a fleet circular formation of auvs under
  finite communication range.
\newblock In {\em Proceedings of the 48h IEEE Conference on Decision and
  Control (CDC) held jointly with 2009 28th Chinese Control Conference}, pages
  8345--8350. IEEE, 2009.

\bibitem{BAKULE200887}
Lubomír Bakule.
\newblock Decentralized control: An overview.
\newblock {\em Annual Reviews in Control}, 32(1):87 -- 98, 2008.

\bibitem{2021arXiv210110394B}
Luca {Ballotta}, Mihailo~R. {Jovanovi{\'c}}, and Luca {Schenato}.
\newblock {Optimal Network Topology of Multi-Agent Systems subject to
  Computation and Communication Latency}.
\newblock {\em arXiv e-prints}, page arXiv:2101.10394, January 2021.

\bibitem{biral2015challenges}
Andrea Biral, Marco Centenaro, Andrea Zanella, Lorenzo Vangelista, and Michele
  Zorzi.
\newblock The challenges of m2m massive access in wireless cellular networks.
\newblock {\em Digital Communications and Networks}, 1(1):1--19, 2015.

\bibitem{brinon2015distributed}
Lara Bri{\~n}{\'o}n-Arranz, Luca Schenato, and Alexandre Seuret.
\newblock Distributed source seeking via a circular formation of agents under
  communication constraints.
\newblock {\em IEEE Transactions on Control of Network Systems}, 3(2):104--115,
  2015.

\bibitem{chehardoli2018adaptive}
Hossein Chehardoli and Ali Ghasemi.
\newblock Adaptive centralized/decentralized control and identification of 1-d
  heterogeneous vehicular platoons based on constant time headway policy.
\newblock {\em IEEE Transactions on Intelligent Transportation Systems},
  19(10):3376--3386, 2018.

\bibitem{doi:10.1177/1077546318791025}
Hossein Chehardoli and Ali Ghasemi.
\newblock Formation control of longitudinal vehicular platoons under generic
  network topology with heterogeneous time delays.
\newblock {\em Journal of Vibration and Control}, 25(3):655--665, 2019.

\bibitem{8844785}
F.~{de Oliveira Souza}, L.~A.~B. {Torres}, L.~A. {Mozelli}, and A.~A. {Neto}.
\newblock Stability and formation error of homogeneous vehicular platoons with
  communication time delays.
\newblock {\em IEEE Transactions on Intelligent Transportation Systems},
  21(10):4338--4349, 2020.

\bibitem{delimpaltadakis2018decentralized}
Ioannis~M Delimpaltadakis, Charalampos~P Bechlioulis, and Kostas~J
  Kyriakopoulos.
\newblock Decentralized platooning with obstacle avoidance for car-like
  vehicles with limited sensing.
\newblock {\em IEEE Robotics and Automation Letters}, 3(2):835--840, 2018.

\bibitem{garcia2016periodic}
Eloy Garcia, Yongcan Cao, and David~W Casbeer.
\newblock Periodic event-triggered synchronization of linear multi-agent
  systems with communication delays.
\newblock {\em IEEE Transactions on Automatic Control}, 62(1):366--371, 2016.

\bibitem{7835931}
H.~{Günther}, R.~{Riebl}, L.~{Wolf}, and C.~{Facchi}.
\newblock Collective perception and decentralized congestion control in
  vehicular ad-hoc networks.
\newblock In {\em 2016 IEEE Vehicular Networking Conference (VNC)}, pages 1--8,
  2016.

\bibitem{circulant}
Robert~M. Gray.
\newblock Toeplitz and circulant matrices: A review.
\newblock {\em Foundations and Trends® in Communications and Information
  Theory}, 2(3):155--239, 2006.

\bibitem{gupta2009delay}
Gagan~Rajesh Gupta and Ness Shroff.
\newblock Delay analysis for multi-hop wireless networks.
\newblock In {\em IEEE INFOCOM 2009}, pages 2356--2364. IEEE, 2009.

\bibitem{KuchlerLangevinEqs}
Uwe Küchler and Beatrice Mensch.
\newblock Langevins stochastic differential equation extended by a time-delayed
  term.
\newblock {\em Stochastics and Stochastic Reports}, 40(1-2):23--42, 1992.

\bibitem{li2017decentralized}
Hanjun Li, Chunhan Feng, Henry Ehrhard, Yijun Shen, Bernardo Cobos, Fangbo
  Zhang, Karthik Elamvazhuthi, Spring Berman, Matt Haberland, and Andrea~L
  Bertozzi.
\newblock Decentralized stochastic control of robotic swarm density: Theory,
  simulation, and experiment.
\newblock In {\em 2017 IEEE/RSJ International Conference on Intelligent Robots
  and Systems (IROS)}, pages 4341--4347. IEEE, 2017.

\bibitem{li20185g}
Shancang Li, Li~Da~Xu, and Shanshan Zhao.
\newblock 5g internet of things: A survey.
\newblock {\em Journal of Industrial Information Integration}, 10:1--9, 2018.

\bibitem{ma2013cooperative}
Lili Ma and Naira Hovakimyan.
\newblock Cooperative target tracking in balanced circular formation: Multiple
  uavs tracking a ground vehicle.
\newblock In {\em 2013 American Control Conference}, pages 5386--5391. IEEE,
  2013.

\bibitem{moore2010source}
Brandon~J Moore and Carlos Canudas-de Wit.
\newblock Source seeking via collaborative measurements by a circular formation
  of agents.
\newblock In {\em Proceedings of the 2010 American control conference}, pages
  6417--6422. IEEE, 2010.

\bibitem{2020arXiv200601816O}
Emre {Ozfatura}, Baturalp {Buyukates}, Deniz {Gunduz}, and Sennur {Ulukus}.
\newblock {Age-Based Coded Computation for Bias Reduction in Distributed
  Learning}.
\newblock {\em arXiv e-prints}, page arXiv:2006.01816, June 2020.

\bibitem{ren2017finite}
Hangli Ren, Guangdeng Zong, Linlin Hou, and Yi~Yang.
\newblock Finite-time resilient decentralized control for interconnected
  impulsive switched systems with neutral delay.
\newblock {\em ISA transactions}, 67:19--29, 2017.

\bibitem{schiano2016rigidity}
Fabrizio Schiano, Antonio Franchi, Daniel Zelazo, and Paolo~Robuffo Giordano.
\newblock A rigidity-based decentralized bearing formation controller for
  groups of quadrotor uavs.
\newblock In {\em 2016 IEEE/RSJ International Conference on Intelligent Robots
  and Systems (IROS)}, pages 5099--5106. IEEE, 2016.

\bibitem{shi2016edge}
Weisong Shi, Jie Cao, Quan Zhang, Youhuizi Li, and Lanyu Xu.
\newblock Edge computing: Vision and challenges.
\newblock {\em IEEE internet of things journal}, 3(5):637--646, 2016.

\bibitem{shi2020joint}
Wenqi Shi, Sheng Zhou, Zhisheng Niu, Miao Jiang, and Lu~Geng.
\newblock Joint device scheduling and resource allocation for latency
  constrained wireless federated learning.
\newblock {\em IEEE Transactions on Wireless Communications}, 20(1):453--467,
  2020.

\bibitem{suleiman2018navion}
Amr Suleiman, Zhengdong Zhang, Luca Carlone, Sertac Karaman, and Vivienne Sze.
\newblock Navion: A fully integrated energy-efficient visual-inertial odometry
  accelerator for autonomous navigation of nano drones.
\newblock In {\em 2018 IEEE Symposium on VLSI Circuits}, pages 133--134. IEEE,
  2018.

\bibitem{yi2015survey}
Shanhe Yi, Cheng Li, and Qun Li.
\newblock A survey of fog computing: concepts, applications and issues.
\newblock In {\em Proceedings of the 2015 workshop on mobile big data}, pages
  37--42, 2015.

\bibitem{yuan2017outdoor}
Quan Yuan, Jingyuan Zhan, and Xiang Li.
\newblock Outdoor flocking of quadcopter drones with decentralized model
  predictive control.
\newblock {\em ISA transactions}, 71:84--92, 2017.

\bibitem{7994706}
X.~{Zong}, T.~{Li}, G.~{Yin}, L.~Y. {Wang}, and J.~{Zhang}.
\newblock Stochastic consentability of linear systems with time delays and
  multiplicative noises.
\newblock {\em IEEE Transactions on Automatic Control}, 63(4):1059--1074, April
  2018.

\bibitem{ZONG2019412}
Xiaofeng Zong, Tao Li, and Ji-Feng Zhang.
\newblock Consensus conditions of continuous-time multi-agent systems with
  time-delays and measurement noises.
\newblock {\em Automatica}, 99:412 -- 419, 2019.

\end{thebibliography}
	}{

	}

\end{document}